\newcommand{\R}{\mathbb{R}}
\newcommand{\N}{\mathbb{N}}
\newcommand{\cO}{\mathcal{O}}
\newcommand{\cR}{\mathcal{R}}
\newcommand{\tr}{\tilde r}
\newcommand{\tO}{\tilde O}
\newcommand{\abs}[1]{\left\vert #1 \right\vert}
\newcommand{\norm}[1]{\left\Vert #1 \right\Vert}
\newcommand{\OneTo}[2]{#1\in[#2]}
\newcommand{\one}{\pmb{1}}
\newcommand{\wmax}{w_{max}^B}
\newcommand{\cif}{& \text{if }}
\newcommand{\SAT}{\textsc{Max 2-Sat}\xspace}
\newcommand{\SATF}{{\normalfont SAT/Flip}\xspace}
\newcommand{\KM}{\textsc{Discrete $K$-Means}\xspace}
\newcommand{\KMS}{{\normalfont DKM/Swap}\xspace}
\newcommand{\UFL}{\textsc{Uncapacitated Facility Location}\xspace}
\newcommand{\MUFL}{\textsc{Metric Uncapacitated Facility Location}\xspace}
\newcommand{\MUFLS}{{\normalfont MUFL/Swap}\xspace}
\newcommand{\NAE}{\textsc{Pos NAE Max 2-Sat}\xspace}
\newcommand{\NAEF}{{\normalfont PNAESAT/Flip}\xspace}
\newcommand{\FKM}{\textsc{Discrete Fuzzy $K$-Means}\xspace}
\newcommand{\FKMS}{{\normalfont DFKM/Swap}\xspace}
\newcommand{\pkm}{\phi_{KM}}
\newcommand{\pfl}{\phi_{FL}}
\newcommand{\pfkm}{\phi_{FKM}}
\newtheorem{definition}{Definition}
\newtheorem{theorem}[definition]{Theorem}
\newtheorem{lemma}[definition]{Lemma}
\newtheorem{claim}[definition]{Claim}
\newtheorem{corollary}[definition]{Corollary}
\newtheorem{proposition}[definition]{Proposition}
\newcommand*{\defeq}{\mathrel{\rlap{%
                     \raisebox{0.3ex}{$\m@th\cdot$}}%
                     \raisebox{-0.3ex}{$\m@th\cdot$}}%
                     =}
\begin{document}

\thispagestyle{empty}

\begin{center}
	{\LARGE \textbf{Complexity of Single-Swap Heuristics for Metric Facility Location and Related Problems}\footnote{This is a full version of the paper with the same name that will be presented at CIAC 2017.}}
	\bigskip
	
	{\Large Sascha Brauer}
	
	\texttt{sascha.brauer{@}uni-paderborn.de}
	\bigskip
	
	Department of Computer Science
	
	Paderborn University
	
	33098 Paderborn, Germany
\end{center}  
\begin{abstract}
	Metric facility location and $K$-means are well-known problems of combinatorial optimization.
	Both admit a fairly simple heuristic called \emph{single-swap}, which adds, drops or swaps open facilities until it reaches a local optimum.
	For both problems, it is known that this algorithm produces a solution that is at most a constant factor worse than the respective global optimum.
	In this paper, we show that single-swap applied to the weighted metric uncapacitated facility location and weighted discrete $K$-means problem is tightly PLS-complete and hence has exponential worst-case running time.
\end{abstract}

\section{Introduction}

Facility location is an important optimization problem in operations research and computational geometry.
Generally speaking, the goal is to choose a set of locations, called facilities, minimizing the cost of serving a given set of clients.
The service cost of a client is usually measured in some form of distance from the client to its nearest open facility.
To prevent the trivial solution of opening a facility at each possible location, we usually introduce some sort of opening cost penalizing large sets of open facilities.
This general framework comprises a plethora of problems using different functions to measure distance and combinations of opening and service cost.
In this paper, we discuss two popular problems closely related to facility location: \MUFL (MUFL) and \KM (DKM).

\subsection{Problem Definitions}

In an \UFL (UFL) problem we are given a set of clients $C$, a weight function $w:C\rightarrow\N$ on the clients, a set of facilities $F$, an opening cost function $f:F\rightarrow\R$, and a distance function $d: C\times F\rightarrow \R$.
The goal is to find a subset of facilities $O\subset F$ minimizing

\[ \pfl(C,F,O) = \sum_{c\in C} w(c)\min_{o\in O}\{d(c,o)\} + \sum_{o\in O} f(o) \ . \]

This problem is uncapacitated in the sense, that any open facility can serve, i.e. be the nearest open facility to, any number of clients.
Simply speaking, opening a lot of facilities incurs high opening cost, but small service cost, and vice versa.
MUFL is a special case of this problem, where we require the distance function $d$ to be a metric on $C\cup F$.

DKM is a problem closely related to UFL, where we do not differentiate between clients and facilities, but are given a single set of points $C\subset\R^D$.
We measure distance between points $p,q\in C$ as $d(p,q) = \norm{p-q}^2$.
Furthermore, instead of imposing an opening cost, we allow at most $K$ locations to be opened.
Hence, the goal is to find $O\subset C$ with $\abs{O} = K$ minimizing

\[ \pkm(C,O) = \sum_{c\in C} w(c) \min_{o\in O}\{\norm{c - o}^2 \} \ . \]

Notice, that we consider the \emph{weighted} variant of both MUFL and DKM, where each client is associated with a positive weight.
Such a weight can be interpreted as the importance of serving the client or as multiple clients present in the same location.

\subsection{Local Search}

A popular approach to solving hard problems of combinatorial optimization is local search.
The general idea of a local search algorithm is to define a small \emph{neighbourhood} for each feasible solution.
Given a problem instance and an initial solution, the algorithm replaces the current solution by a better solution from its neighbourhood.
This is repeated until the algorithm finds no improvement, hence has found a solution that is not worse than any solution in its neighbourhood.
The runtime and the quality of the produced solutions of a local search algorithm depend heavily on its definition of neighbourhood.

Theoretical aspects of local search are captured in the definition of the complexity class PLS. 
There is a special type of reduction, called PLS-reduction, with respect to which PLS has complete problems \cite{johnson88}.
Notably, there are PLS-complete problems, which exhibit two important properties.
First, given an instance and an initial solution, it is PSPACE-complete to find a locally optimal solution computed by a local search started with the given initialization.
Second, there is an instance and an initial solution, such that this initial solution is exponentially many local search steps away from every locally optimal solution \cite{monien10}.
There is a stronger version of PLS-reductions, so-called \emph{tight} PLS-reductions which are of special interest, as they preserve both of these properties \cite{papadimitriou90}.
PLS-complete problems having these two properties are therefore sometimes called \emph{tightly} PLS-complete.

In the following, we examine a local search algorithm for MUFL und DKM called the \emph{single-swap heuristic}.
For MUFL, we allow the algorithm to either close an open facility, newly open a closed facility or do both in one step (\emph{swap} an open facility).
Since feasible DKM solutions consist of exactly $K$ open facilities, we do not allow the algorithm to solely open or close a facility, but only to swap open facilities.
Formally, we define these respective neighbourhoods as
\begin{align*}
	N_{MUFL}(O) &= \{O'\subset F \;|\; \abs{O \setminus O'}\leq 1 \wedge \abs{O'\setminus O} \leq 1 \}\quad\text{and} \\
	N_{DKM}(O) &= \{O'\subset C \;|\; \abs{O \setminus O'} = 1 \wedge \abs{O'\setminus O} = 1 \}  \ . 
\end{align*}
By \MUFLS and \KMS we denote the respective problem as a PLS-problem associated with the described single-swap neighbourhood.

\subsection{Related Work}

Approximating MUFL has been subject to considerable amount of research using different algorithmic techniques.
The problem can be $4$-approximated using LP-rounding \cite{shmoys97}, $3$-approximated using a Primal-Dual technique \cite{jain01}, and $1.61$-approximated using a greedy algorithm \cite{jain02}.
However, it is known that there is no polynomial time algorithm approximating MUFL better than $1.463$ unless NP~$\subseteq$~DTIME$(n^{\log\log n})$ \cite{guha99}.
Arya et al. showed that the standard local search algorithm of \MUFLS computes a $3$-approximation for MUFL \cite{arya04}.

A popular generalization of DKM called $K$-means admits facilities to be opened anywhere in the $\R^D$ instead of restricting possible locations to the locations of the clients.
The most popular local search algorithm for the $K$-means problem is called $K$-means algorithm, or Lloyd's algorithm \cite{lloyd82}.
It is well-known that the solutions produced by the $K$-means algorithm can be arbitrarily bad in comparison to an optimal solution.
Furthermore, it was shown that in the worst case, the $K$-means algorithm requires exponentially many improvement steps to reach a local optimum, even if $D=2$ \cite{vattani11}.
Recently, Roughgarden and Wang proved that, given a $K$-means instance and an initial solution, it is PSPACE-complete to determine the local optimum computed by the $K$-means algorithm started on the given initial solution \cite{roughgarden16}.
This is in line with several papers proving the same result for the simplex method using different pivoting rules \cite{adler2014,fearnley15}.
Kanungo et al. proved that the standard local search algorithm of \KMS computes an $\cO(1)$-approximation for DKM and hence also for general $K$-means \cite{kanungo04}.
They argue that a variation of the single-swap neighbourhood, where we impose some lower bound on the improvement of a single step, yields an algorithm with polynomial runtime but a slightly worse approximation ratio.
However, there is no known upper bound on the runtime of the exact single-swap heuristic, even for unweighted point sets.
Another variation of single-swap is the multi-swap heuristic, where we allow the algorithm to simultaneously swap more than one facility in each iteration.
For a large enough neighbourhood, i.e. swapping enough facilities in a single iteration, this heuristic yields a PTAS in Euclidean space with fixed dimension \cite{cohen16} and in metric spaces with bounded doubling dimension \cite{friggstad16}.

\subsection{Our Contribution}

In this paper, we analyze the PLS complexity of \MUFLS and \KMS.
By presenting a tight reduction from \SAT, we show that both problems are tightly PLS-complete, hence that both local search algorithms require exponentially many steps in the worst case and that given some initial solution it is PSPACE-complete to find the solution computed by the respective algorithm started on this initial set of open facilities.
Our reduction only works for the, previously introduced, weighted variants of MUFL and DKM.
That is, we construct instances with a non-trivial weight for each client.
Furthermore, our reduction for DKM requires the dimension of the point set to be on the order of the number of points.
The performance of the single-swap heuristic is basically unaffected from using the more general variants of MUFL and DKM, since the known approximation bounds also hold for the weighted version of both problems, and since the runtime of the heuristic only depends linearly on the weights and the dimension.
However, this means that our reduction is weaker than a proof of the same properties for the unweighted variants or for a constant number of dimensions would be.

\begin{theorem}\label{thm:main}
	\MUFLS and \KMS are tightly PLS-complete.
\end{theorem}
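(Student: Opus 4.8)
The plan is to establish membership in PLS first and then to prove tight PLS-hardness by a chain of tight reductions originating at \SATF, which is known to be tightly PLS-complete. Membership is the routine part: a feasible solution is a set of open facilities, the single-swap neighbourhoods $N_{MUFL}$ and $N_{DKM}$ have size polynomial in $\abs{F}$ and $\abs{C}$, and both objectives $\pfl$ and $\pkm$ are evaluable in polynomial time, so one can check in polynomial time whether a given solution is locally optimal and, if not, exhibit a better neighbour. The substance of the theorem therefore lies in the hardness direction, and I would keep the entire chain tight so that the two hallmark properties of tightly PLS-complete problems (PSPACE-completeness of the computed local optimum and exponentially long improving paths) transfer all the way to \MUFLS and \KMS; recall that tight PLS-reductions compose.

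The first link I would build is a tight reduction $\SATF \to \NAEF$. The point of routing through \NAEF is that a positive not-all-equal clause on two variables is satisfied exactly when the two variables \emph{disagree}, so the instance is invariant under globally flipping every variable; this symmetric, two-state structure is precisely what a swap-based gadget can imitate, whereas ordinary clauses carry an asymmetry that a swap neighbourhood does not. The reduction itself is the standard symmetrisation: introduce one auxiliary polarity variable and rewrite each weighted clause of the \SAT instance as positive not-all-equal clauses against that variable, then argue that flips of the \SATF instance correspond to flips of the \NAEF instance in a way satisfying the tightness conditions. It is also convenient to interpose the fuzzy variant \FKMS as a bridge, because its objective depends smoothly on \emph{all} chosen centres rather than only on the nearest one, which makes encoding a pairwise clause as a single geometric interaction term much more transparent before transferring the construction to the crisp problems.

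The core of the argument is the geometric reduction into \KMS and \MUFLS. For each variable $x_i$ I would create a gadget consisting of two candidate facility locations, representing the states $x_i=0$ and $x_i=1$, together with auxiliary clients of large weight that make it strictly preferable to keep exactly one facility of the gadget open. In \KMS this is enforced automatically by setting $K$ equal to the number of variables and placing the gadgets so that covering each gadget's heavy clients consumes exactly one of the $K$ centres; in \MUFLS, where cardinality is not fixed, the same effect is produced by tuning the opening costs $f$ so that opening zero or two facilities in a gadget is always worse than opening one. A swap inside a gadget then mirrors a flip of the corresponding variable. Each positive not-all-equal clause $\{x_i,x_j\}$ of weight $w$ is encoded by clients positioned so that their service cost is smaller when the open facilities of gadgets $i$ and $j$ represent disagreeing states; I would calibrate the coordinates (for \KMS) or the metric distances (for \MUFLS) so that the \emph{change} in objective caused by a single swap equals, up to a fixed affine transformation, the \emph{change} in satisfied clause weight caused by the corresponding flip. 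For \KMS the squared-Euclidean distance forces the gadgets to be embedded in mutually (near-)orthogonal coordinate blocks so that distinct gadgets do not interfere, which is exactly why the dimension must grow with the number of variables; for \MUFLS the analogous issue is that $d$ must be an honest metric, so I would take the shortest-path metric of a suitable weighted graph on $C\cup F$ to guarantee the triangle inequality while preserving the intended gadget distances.

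The main obstacle, and the place where most of the work sits, is verifying \emph{tightness} of this geometric reduction rather than mere hardness. Concretely I would exhibit a set of reasonable solutions --- those in which every gadget has exactly one open facility --- show it contains all local optima and surjects onto the solutions of the \NAEF instance under the back-map, and then prove the decisive path condition: any improving trajectory of the swap dynamics that leaves and re-enters this reasonable set corresponds to at most one improving flip, with no spurious shortcuts created by add/drop moves or by interactions between gadgets. The delicate points are that the min-based (or fuzzy) service cost must encode a genuinely \emph{pairwise} interaction without a client ever preferring a facility from an unintended gadget, and that the weight and cost calibration must remain robust along \emph{all} improving paths and not only at equilibrium; getting the separation of scales right between the gadget-forcing weights and the clause weights is what makes the correspondence exact.
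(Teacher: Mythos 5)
Your overall architecture---variable gadgets of two facility locations, heavy clients forcing exactly one open facility per gadget, clause clients whose service cost registers the assignment, a set of ``reasonable'' solutions that contains all local optima and makes the back-map bijective, and the path condition for tightness---is exactly the strategy of the paper's proof. The genuine gap is your first link, \SATF $\leq_{PLS}$ \NAEF via ``the standard symmetrisation: introduce one auxiliary polarity variable and rewrite each weighted clause.'' This step fails for clauses containing negated literals. Any nonnegative combination of positive NAE $2$-clauses over $\{x,y,z\}$ is a nonnegative combination of the disagreement indicators $[x\neq y]$, $[x\neq z]$, $[y\neq z]$ plus a constant; matching the truth table of $(\bar x\vee y)$ on the assignments with $z=0$ forces coefficients $-\tfrac{1}{2}$ on $[x\neq y]$ and $[x\neq z]$, so no positive-weight encoding against a single polarity variable exists, and the global-flip invariance you invoke does not rescue this, since the constraints above already live on one side of the symmetry. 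A tightness-preserving reduction from \SAT to \NAE (equivalently, to Max Cut) needs per-clause auxiliary variables and is essentially one of the nontrivial reductions of Sch\"affer and Yannakakis, not a one-line rewriting. The repair is cheap: \NAEF is itself known to be tightly PLS-complete, so you can cite that and delete the link---which is what the paper does in the one place it actually needs \NAEF.

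The deeper point is that the detour through \NAEF (and \FKMS as a conceptual bridge) rests on a false premise. The paper reduces \emph{directly} from \SATF to both \MUFLS and \KMS: the asymmetry of an ordinary 2-SAT clause is encoded in the metric itself, by placing the clause client at distance $4/3$ (resp.\ $1+\epsilon$) from the two literals it contains and $5/3$ (resp.\ $1+c\epsilon$) from their negations, so a satisfied clause pays the smaller distance and an unsatisfied one the larger; swap gadgets imitate asymmetric clauses perfectly well. The NAE symmetry is needed in the paper only for the fuzzy $K$-means extension, where the objective depends on all centres rather than only the nearest one. Moreover, if you do route through \NAE, beware that for the min-based objectives a \emph{single} clause client cannot encode a NAE clause: with distances $a_1,a_2,a_3,a_4$ to $x_i,\bar x_i,x_j,\bar x_j$, the cost in the two ``disagree'' states is $\min(a_1,a_4)$ or $\min(a_2,a_3)$, each of which is at least $\min(a_1,a_2,a_3,a_4)$, which is the smaller of the two ``agree'' costs---so you need two clause clients per clause (one for $\{x_i,x_j\}$, one for $\{\bar x_i,\bar x_j\}$), as in the paper's fuzzy construction. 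With the citation repair and the direct (or corrected two-client) encoding, your scale-separation and tightness plan goes through along the same lines as the paper's proof; as written, the chain's first link does not.
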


We prove the two parts of Theorem~\ref{thm:main} in Sections~\ref{sec:mufl} and \ref{sec:km}.

%

\section{Preliminaries}\label{sec:prelim}

We present \SAT (SAT), a variant of the classic satisfiability problem, which is elementary in the study of PLS.
An instance of SAT is a Boolean formula in conjunctive normal form, where each clause consists of exactly $2$ literals and has some positive integer weight assigned to it.
The cost of a truth assignment is the sum of the weights of all satisfied clauses.
The PLS problem \SATF consists of SAT, where the neighbourhood of an assignment is given by all assignments obtained by changing the truth value of a single variable.

\begin{theorem}[\cite{schaeffer91}]
	\SATF is tightly PLS-complete.
\end{theorem}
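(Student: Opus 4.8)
The statement has two halves. Membership of \SATF in PLS is immediate: a solution is a truth assignment, its cost is the total weight of the satisfied clauses and is computable in polynomial time, the flip neighbourhood of an assignment has exactly as many elements as there are variables, and a best neighbour is found by trying each single-variable flip. All the content therefore lies in establishing tight PLS-hardness, and the plan is to give a tight reduction from the circuit-flip problem FLIP, which is PLS-complete \cite{johnson88} and whose completeness reduction is tight \cite{papadimitriou90}. Recall that a FLIP solution is an assignment to the input wires of a Boolean circuit $S$, the neighbourhood flips one input bit, and the objective is the value of the output wires read as a binary number.

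Given $S$, the reduction $\Phi$ produces a weighted formula on the input variables of $S$ together with one auxiliary variable per gate. A ternary gate relation such as $z = x \wedge y$ is not itself a $2$-clause, so each gate is simulated by a bundle of weighted clauses of at most two literals (padded to exactly two by a globally fixed dummy variable) whose unique maximum-weight configuration, once the gate's inputs are fixed correctly, forces its output variable to the correct value; the output gates additionally carry low-weight ``payoff'' clauses encoding the FLIP objective. The back map $\Psi$ simply restricts an assignment to the input variables, so that flipping an input variable in \SATF mirrors flipping the same input bit in FLIP.

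The step I expect to be the main obstacle is tightness. Flipping one input bit in FLIP changes the \emph{correct} values of many gate variables simultaneously, whereas \SATF may repair only one variable per step, so a single source move corresponds to a whole cascade of auxiliary flips in the target. The construction is correct and tight only if this cascade is forced to run monotonically to the unique consistent evaluation of $S$, which I would arrange through a weight hierarchy: the consistency clauses of each gate are given a weight dominating the total weight of all gates nearer the outputs and of all payoff clauses. Then, from any inconsistent assignment, flipping the output of the shallowest inconsistent gate is a strict improvement, so correcting gates in topological order strictly increases the objective at every step and never destabilises an already-corrected gate. Declaring the assignments whose gate variables are consistent with their inputs to be the set $\cR$ of reasonable solutions, one checks the three tight-reduction conditions of \cite{papadimitriou90}: $\cR$ contains every local optimum (the weight hierarchy forbids a local optimum from being inconsistent), every FLIP solution has a polynomial-time computable preimage in $\cR$, and any improving path joining two reasonable solutions through intermediate non-reasonable ones projects under $\Psi$ onto a single FLIP transition. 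Verifying this last condition is the delicate part, as it is precisely the assertion that the repair cascades cannot ``shortcut'' the FLIP transition graph; granting it, tight reductions preserve both the PSPACE-hardness of the standard-algorithm problem and the existence of exponentially long improving sequences \cite{monien10,papadimitriou90}, which gives the theorem.

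The purely syntactic part of this encoding can be isolated and double-checked on \textsc{Max-Cut}: replacing each edge $\{u,v\}$ of weight $w$ by the clauses $(x_u \vee x_v)$ and $(\neg x_u \vee \neg x_v)$ of weight $w$ makes the \SAT objective equal the cut value plus the constant total edge weight, while flipping $x_u$ is exactly moving $u$ across the cut; the two flip transition graphs are then isomorphic, so this gadget is tight on its own and confirms that $2$-clauses already carry the combinatorial structure that the gate bundles above must preserve.
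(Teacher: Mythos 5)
Note first that the paper does not prove this statement at all: it imports it wholesale from \cite{schaeffer91}, so your proposal has to be judged against the literature proof it is trying to reconstruct. The PLS-membership half is fine, and so is your terminal gadget (the edge-to-clause-pair translation of \textsc{Max Cut} into \SAT is exactly the standard tight step). The genuine gap is in the core reduction from circuit FLIP, and it is fatal: with a topological weight hierarchy in which every gate's consistency bundle dominates all gates nearer the outputs and all payoff clauses, \emph{every} circuit-consistent assignment becomes a local optimum of the target instance. Flipping a gate variable destroys its own dominant consistency bundle and can recoup at most the strictly smaller downstream total; flipping an input variable $x_i$ immediately breaks the consistency bundles of every gate reading $x_i$ (a loss at the top of the hierarchy), while the gate variables --- and hence the low-weight payoff clauses --- are unchanged by that single flip. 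So the first step of the ``repair cascade'' you describe is always strictly worsening, and the cascade never starts. Consequently the local optima of $\Phi(S)$ project onto \emph{all} FLIP solutions, not just the locally optimal ones, and $(\Phi,\Psi)$ is not even a correct PLS-reduction, let alone a tight one. Your argument conflates two different statements: ``every local optimum is consistent'' (true, by your shallowest-inconsistent-gate move) and ``every consistent local optimum projects to a FLIP local optimum'' (false in this construction).

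This is precisely the obstacle that makes the theorem of Sch\"affer and Yannakakis hard: a single local move in the target can never re-evaluate the circuit, so the reduction must contain machinery that makes the \emph{first} flip of an improving cascade itself profitable --- for instance by having the instance encode, for each input bit, a comparison of the objective before and after that bit is flipped, and rewarding the flip accordingly. Their actual proof does this via a long reduction from circuit FLIP to \NAE/Flip, which then chains through \textsc{Max Cut} to \SAT; the weight-hierarchy-plus-gate-gadgets scheme is the well-known naive attempt that works for NP-hardness of the global optimization problem but provably cannot work for PLS-hardness, for exactly the reason above. So: membership correct, final gadget correct, but the hardness argument --- which is all of the content --- does not go through.
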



For each clause set $B$ and truth assignment $T$ we denote the SAT cost of $T$ with respect to $B$ by $w(B,T)$. 
For a literal $x$ we denote the set of all clauses in $B$ containing $x$ by $B(x)$.
Further, we denote the set of all clauses in $B$ satisfied by $T$ by $B_t(T)$ and let $B_f(T) = B\setminus B_t(T)$.
Finally, we set $\wmax = \max_{b\in B}\{ w(b) \}$.
\section{The Facility Location Reduction}\label{sec:mufl}
In the following, we formulate and prove one of our main results.

\begin{proposition}\label{prop:mufl}
	\SATF $\leq_{PLS}$ \MUFLS and this reduction is tight.
\end{proposition}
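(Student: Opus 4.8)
The plan is to build, from a \SATF instance with clause set $B$ over variables $x_1,\dots,x_n$, a metric facility location instance whose open facilities encode a truth assignment. For each variable $x_i$ I introduce two facilities $t_i$ and $f_i$ (``true'' and ``false'') together with one \emph{variable client} $g_i$, and for each clause $b\in B$ one \emph{clause client} $c_b$. Call a client--facility pair \emph{linked} if the facility is $t_i$ or $f_i$ and the client is $g_i$, or if the facility is a satisfying facility of the clause client $c_b$ (that is, $t_i$ when $x_i\in b$ and $f_i$ when $\lnot x_i\in b$). I set $d$ to $1$ on all linked pairs and to $2$ on all other distinct pairs (and $0$ on the diagonal); since every value lies in $\{1,2\}$ the triangle inequality is automatic, so $d$ is a metric on $C\cup F$. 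Finally I set $w(g_i)=W$, $w(c_b)=w(b)$, and a uniform opening cost $f(\cdot)=\phi$, with $\phi\defeq\sum_{b\in B}w(b)+1$ and $W\defeq\phi+1$.

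First I would show that these magnitudes force every local optimum to be \emph{canonical}, i.e.\ to open exactly one of $\{t_i,f_i\}$ for each $i$. If some pair is empty, opening one of its facilities lowers total service cost by at least $W$ (from $g_i$ alone) while paying only $\phi<W$, so it improves; if some pair has both facilities open, dropping one leaves $g_i$ linked to the other and can raise clause-client service by at most $\sum_b w(b)<\phi$, so it also improves. Hence no local optimum has an empty or doubled pair. The solution map $\Psi$ then sends $O$ to the assignment $T$ with $x_i$ true iff $t_i\in O$; on canonical solutions this is a bijection onto truth assignments, and both $\Phi$ and $\Psi$ are clearly polynomial (all weights are bounded by $\sum_b w(b)$).

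Next I would verify the exact cost correspondence. On a canonical $O$ with $\Psi(O)=T$ every variable client pays $W$, every clause client pays $w(b)$ if $b$ is satisfied and $2w(b)$ otherwise, and the opening cost is $n\phi$, so $\pfl(O)=\bigl(nW+n\phi+2\sum_b w(b)\bigr)-w(B,T)$, an additive constant minus the SAT value. A single flip of $x_i$ is realized by the single swap $t_i\leftrightarrow f_i$, which stays canonical and changes $\pfl$ by exactly $-\bigl(w(B,T')-w(B,T)\bigr)$. Thus a flip improves $T$ iff the corresponding swap improves $O$, whence the image of every local optimum is a local optimum of \SATF, establishing the PLS-reduction.

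Finally, for tightness I would take the reasonable set $\mathcal{R}$ to be all canonical solutions. Condition (T1) holds because every local optimum is canonical, and (T2) holds because each assignment has the obvious canonical preimage. The crux — and what I expect to be the main obstacle — is (T3): I must show that every strictly improving move out of a canonical $O$ again lands in $\mathcal{R}$, so that consecutive reasonable solutions differ by one swap and map to a single improving flip. Adds and drops are ruled out exactly as in the local-optimum argument, leaving only the cross-pair swap (emptying one pair while doubling another); there the $+W$ penalty on the emptied variable client dominates the at most $\sum_b w(b)<W$ service savings, so such a swap is never improving. This leaves same-pair swaps as the only improving moves, giving the step-for-step isomorphism between the two transition graphs that tightness demands. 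The delicate point throughout is that $\phi$ and $W$ must satisfy the three strict inequalities $\sum_b w(b)<\phi<W$ and $\sum_b w(b)<W$ simultaneously while staying polynomial, which the choice above arranges.
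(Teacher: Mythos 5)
Your proposal is correct and follows essentially the same route as the paper: reduce from \SATF by making each literal a facility, use a large weight to force every local optimum to open exactly one facility per variable (the paper's ``reasonable'' solutions, your ``canonical'' ones), prove an exact cost correspondence between the MUFL objective and the SAT weight on that set, and obtain tightness by showing that no improving move ever leaves it, so that paths between canonical solutions are single same-pair swaps corresponding to flips. The only differences are cosmetic gadget choices --- you place a separate variable client at distance $1$ from both of its facilities and use a $\{1,2\}$-valued metric with additive constants $\phi$ and $W$, whereas the paper colocates clients with facilities and uses distances $\{1,4/3,5/3,2\}$ with $W = M\cdot\wmax$ and opening cost $2W$ --- and both parameterizations make the same strict inequalities go through.
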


The following proof of Proposition~\ref{prop:mufl} is divided into three parts.
First, we present our construction of a PLS-reduction $(\Phi,\Psi)$, second, we argue on the correctness of this reduction and finally we show that the reduction is tight.

\subsection[Construction of Phi and Psi]{Construction of $\Phi$ and $\Psi$}\label{subsec:construction}

First, we construct the function $\Phi$ mapping an instance $(B,w)\in$ \SAT over the variables $\{x_n\}_{\OneTo{n}{N}}$ to an instance $(C,\omega,F,f,d)\in$ \MUFL.
In the following, we denote $M \defeq \abs{B}$.
Each variable $x_n$ appears as a facility twice, once as a positive and once as a negative literal.
Formally, we set $F = \{x_n,\bar x_n\}_{\OneTo{n}{N}}$.
We further locate a client at each facility and a client corresponding to each clause, so $C = F \cup B$.
We set the distance function $d: C\cup F\times C\cup F \rightarrow \R$ to
\[ d(p,q) = d(q,p) = 
	\begin{cases}
		0 \cif p = q \\
		1 \cif p = x_n \wedge q = \bar x_n \\
		\frac{4}{3} \cif (p = x_n \vee p = \bar x_n) \wedge q = b_m \wedge p\in b_m \\
		\frac{5}{3} \cif (p = x_n \vee p = \bar x_n) \wedge q = b_m \wedge \bar p\in b_m \\
		2 & \text{else.}
	\end{cases}
\]

Simply speaking, a literal has distance $1$ from its negation, clauses have distance $4/3$ from literals they contain, distance $5/3$ from literals whose negation they contain, and all other clients/facilities have distance $2$ from each other.
It is easy to see that $d$ is a metric.
The weight of a client corresponding to a clause is the same as the weight of the clause.
If a client corresponds to a literal, then its weight is $W = M\cdot\wmax$.
\[ \omega(p) = 
	\begin{cases}
		w(b_m) \cif p = b_m \\
		W & \text{else}
	\end{cases}
\]
The opening cost function is constant $f\equiv 2W$.

Second, we construct the function $\Psi$ mapping solutions of $\Phi(B,w)$ back to solutions of $(B,w)$.
Given a set $O\subset F$ we let each variable $x_n$ be true if the facility $x_n\in O$ and let it be false otherwise.

In the following, we denote $\Phi(B,w) = (C,\omega,F,2W,d)$, $\Psi(B,w,O) = T_O$, and $d(c,O) = \min_{o\in O}\{d(c,o)\}$.

\subsection[(Phi,Psi) is a PLS-reduction]{$(\Phi,\Psi)$ is a PLS-reduction}

To prove that $(\Phi,\Psi)$ is a PLS-reduction we need to argue that $T_O$ is locally optimal for $(B,w)$ if $O$ is locally optimal for $\Phi(B,w)$.
Observe, that $\Psi$ is not injective, since $\Phi(B,w)$ has more feasible solutions than $(B,w)$.
We can tackle this problem by characterizing a subset of solutions for $\Phi(B,w)$ we call \emph{reasonable} solutions.

\begin{definition}
	Let $O\subset F$. 
	We call $O$ \emph{reasonable} if $\abs{O} = N$ and 
	\[ \forall \OneTo{n}{N}: x_n\in O \vee \bar x_n\in O \ . \]
\end{definition}

Reasonable solutions have several useful properties, which we prove in the following.
The restriction of $\Psi$ to reasonable solutions is a bijection, the MUFL cost of a reasonable solution is closely related to the SAT cost of its image under $\Psi$, and all locally optimal solutions of $\Phi(B,w)$ are reasonable.
This characterization of solutions is crucial to proving correctness and tightness of our reduction.

\begin{lemma}\label{lem:muflcost}
	If $O,O'\subset F$ are reasonable solutions for $\Phi(B,w)$, then
	\[ w(B,T_O) < w(B, T_{O'}) \Leftrightarrow \pfl(C,F,O) > \pfl(C,F,O') \ . \]
\end{lemma}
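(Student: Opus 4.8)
The plan is to show that, on reasonable solutions, $\pfl(C,F,\cdot)$ is an affine function of the SAT cost $w(B,T_O)$ with negative slope and a fixed additive constant; since an affine map with negative slope reverses order, the stated equivalence follows at once. Concretely, I would establish that $\pfl(C,F,O) = C_0 - \tfrac{1}{3}\,w(B,T_O)$ for a constant $C_0$ that does not depend on the particular reasonable solution.

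First I would split $\pfl(C,F,O)$ into the opening cost $\sum_{o\in O}f(o)$ and the service cost, and split the latter over the two kinds of clients: the literal clients in $F$ and the clause clients in $B$. Two of these three contributions are constant across all reasonable solutions. The opening cost equals $2NW$, since $\abs{O}=N$ and $f\equiv 2W$. For the literal clients, reasonableness guarantees that each pair $\{x_n,\bar x_n\}$ has exactly one open member: the open literal is served at distance $0$, while the client at the closed literal has its negation open at distance $1$ and every other facility at distance $2$, so each pair contributes $W$ and the literal clients contribute $NW$ in total.

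The only varying term is the clause-client service cost. Here I would use that, by the definition of $T_O$ together with reasonableness, a literal $\ell$ is open in $O$ exactly when $\ell$ is satisfied by $T_O$. Hence for $b_m\in B_t(T_O)$ at least one literal contained in $b_m$ is open, so $d(b_m,O)=\tfrac{4}{3}$ (no facility is nearer), whereas for $b_m\in B_f(T_O)$ none of its literals is open, so all their negations are open at distance $\tfrac{5}{3}$ and the remaining facilities sit at distance $2$, giving $d(b_m,O)=\tfrac{5}{3}$. Summing $\omega(b_m)\,d(b_m,O)$ over all clauses and writing the total weight of the unsatisfied clauses as $\sum_{b\in B}w(b)-w(B,T_O)$, the clause-client cost collapses to $\tfrac{5}{3}\sum_{b\in B}w(b)-\tfrac{1}{3}\,w(B,T_O)$.

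Collecting the three contributions yields
\[ \pfl(C,F,O)=\left(3NW+\tfrac{5}{3}\sum_{b\in B}w(b)\right)-\tfrac{1}{3}\,w(B,T_O), \]
where the parenthesised constant is identical for every reasonable solution. Comparing this expression for $O$ and $O'$ gives the asserted equivalence directly. I expect the only delicate step to be the case analysis pinning down $d(b_m,O)$: one must verify in each case that no open facility is nearer than the claimed value, which rests entirely on the layered structure of the metric $d$ and on reasonableness forcing exactly one literal per variable to be open. Everything after that is linear bookkeeping.
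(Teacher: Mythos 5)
Your proposal is correct, and its first half coincides with the paper's own key step: both establish the exact cost of a reasonable solution, namely $\pfl(C,F,O) = 3WN + \frac{4}{3}\sum_{b_m\in B_t(T_O)}w(b_m) + \frac{5}{3}\sum_{b_m\in B_f(T_O)}w(b_m)$ (Claim~\ref{claim:appmuflcost} in the paper), by the same accounting of opening cost, literal clients, and clause clients, including the verification that under reasonableness a satisfied clause is served at distance $4/3$ and an unsatisfied one at distance $5/3$. Where you diverge is in how the equivalence is extracted from this formula. The paper partitions $B$ into the four sets $B_{ab}=B_a(T_O)\cap B_b(T_{O'})$, rewrites both costs over this partition, and compares them term by term via the observation that $w(B,T_O)<w(B,T_{O'})$ is equivalent to $\sum_{b_m\in B_{tf}}w(b_m)<\sum_{b_m\in B_{ft}}w(b_m)$. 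You instead use $B_f(T_O)=B\setminus B_t(T_O)$ to collapse the formula to $\pfl(C,F,O)=3WN+\frac{5}{3}\sum_{b\in B}w(b)-\frac{1}{3}w(B,T_O)$, an affine and strictly decreasing function of the SAT cost whose constant depends only on the instance, so the order reversal is immediate. Your route is shorter and makes the structural reason for the reduction visible --- the MUFL cost of reasonable solutions is a decreasing affine transform of the SAT cost; it also transfers verbatim to the fuzzy $K$-means analogue later in the paper, where the paper again deploys the four-set decomposition and the needed ``negative slope'' condition is exactly its inequality $\Gamma_1+\Gamma_3-2\Gamma_2>0$.
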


%
%
\begin{proof}
	The following claim is essential to our proof of Lemma~\ref{lem:muflcost}.
	
	\begin{claim}\label{claim:appmuflcost}
		If $O\subset F$ is reasonable, then
		\[ \pfl(C,F,O) = \frac{4}{3} \sum_{b_m\in B_t(T_O)} w(b_m) + \frac{5}{3} \sum_{b_m\in B_f(T_O)} w(b_m) + 3WN \ . \]
	\end{claim}

	\begin{proof}
		Since $\abs{O} = N$, we have that the total opening cost of facilities is $2WN$.
		Observe, that since either $x_n$ or $\bar x_n$ is in $O$, we further obtain that the total service cost of all clients $\{x_n,\bar x_n\}_{\OneTo{n}{N}}$ is $WN$.
		Similar to before, we can observe a one-to-one mapping of the truth assignment of a variable, to whether the corresponding positive or negative literal is in $O$.
		It is easy to see that the clients corresponding to a clause in $B_t(T_O)$ have at least one open facility at distance $4/3$, while the clients corresponding to a clause in $B_f(T_O)$ have two facilities at distance $5/3$ and the rest at distance $2$.
	\end{proof}
	For the sake of brevity we introduce the notation 
	\[ B_{ab} = B_a(T_O) \cap B_b(T_{O'}) \ \]
	for $a,b\in \{t,f\}$.
	Observe that 
	\begin{align}
		w(B,T_O) < w(B,T_{O'}) \Leftrightarrow \sum_{b_m\in B_{tf}} w(b_m) < \sum_{b_m \in B_{ft}} w(b_m) \ . \label{eq:localopt}
	\end{align}
	Hence, using Lemma~\ref{claim:appmuflcost} we obtain
	\begin{align*}
		\pfl(C,F,O')
		 = 3WN + &\frac{4}{3}\sum_{b_m\in B_t(T_{O'})} w(b_m) + \frac{5}{3}\sum_{b_m\in B_f(T_{O'})} w(b_m) \\
		 = 3WN + &\frac{4}{3}\sum_{b_m\in B_{tt}} w(b_m) + \frac{4}{3}\sum_{b_m\in B_{ft}} w(b_m) +\\
		 &\frac{5}{3}\sum_{b_m\in B_{tf}} w(b_m) + \frac{5}{3}\sum_{b_m\in B_{ff}} w(b_m)\\
		 \overset{\eqref{eq:localopt}}{<} 3WN + &\frac{4}{3}\sum_{b_m\in B_{tt}} w(b_m) + \frac{5}{3}\sum_{b_m\in B_{ft}} w(b_m) +\\
		 &\frac{4}{3}\sum_{b_m\in B_{tf}} w(b_m) + \frac{5}{3}\sum_{b_m\in B_{ff}} w(b_m)\\
		 = 3WN + &\frac{4}{3}\sum_{b_m\in B_t(T_O)}w(b_m) + \sum_{b_m\in B_f(T_O)}w(b_m) 
		 = \pfl(C,F,O) 
	\end{align*}
\end{proof}

\begin{lemma}\label{lem:mufllocalproperty}
	If $O\subset F$ is a locally optimal solution for $\Phi(B,w)$, then $O$ is reasonable.
\end{lemma}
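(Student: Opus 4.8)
The plan is to prove the contrapositive: if $O$ is \emph{not} reasonable, then $O$ has a strictly better neighbour (obtained by a single add or drop), and is therefore not locally optimal. Since refuting local optimality only requires exhibiting one improving move, I never need to examine the full neighbourhood. The whole argument rests on the separation of scales built into the instance. Each literal client carries the enormous weight $W = M\wmax$, the opening cost is $2W$, whereas the total weight of all clause clients is $W' \defeq \sum_{b_m\in B} w(b_m) \le M\wmax = W$, and every clause is served at distance in $[\frac{4}{3},2]$. Hence the total clause service cost always lies in $[\frac{4}{3}W',\,2W']$, so adding or dropping one facility — which moves all clause costs in a single direction — changes the clause contribution by at most $2W' - \frac{4}{3}W' = \frac{2}{3}W' \le \frac{2}{3}W$. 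The literal clients and the opening cost thus dominate everything the clauses can do.

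Recall $O$ is reasonable iff $\abs{O}=N$ and every variable contributes at least one open literal. I would split failure of reasonableness into two cases, handled by two claims. First, suppose for some $n$ that neither $x_n$ nor $\bar x_n$ lies in $O$; I consider adding $x_n$. Beforehand the literal clients $x_n$ and $\bar x_n$ are each served at distance $2$, costing $2W$ apiece; after the add, $x_n$ pays $0$ and $\bar x_n$ pays $W$ (distance $1$ to $x_n$), so the total client cost drops by at least $3W$, while every other client's cost only weakly decreases. Set against the opening cost $2W$, the net change is at most $2W - 3W = -W < 0$, so the move improves. (The degenerate case $O=\emptyset$ is subsumed, since any add turns infinite cost finite.) This forces at least one literal of each variable to be open, giving $\abs{O}\ge N$.

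Second, suppose for some $n$ that both $x_n$ and $\bar x_n$ lie in $O$; I consider dropping $\bar x_n$. Dropping reclaims the opening cost $2W$. Using the first claim, every literal client other than $\bar x_n$ still keeps an open facility at distance $\le 1$ (and $x_n$ remains open), so only the client $\bar x_n$ is affected among literal clients, its cost rising from $0$ to $W$ as it is rerouted to $x_n$. The clause service cost rises by at most $\frac{2}{3}W$ by the scale bound above. The net change is at most $-2W + W + \frac{2}{3}W = -\frac{1}{3}W < 0$, so the move improves, and at most one literal of each variable is open, giving $\abs{O}\le N$. Combining the two claims yields exactly one of $x_n,\bar x_n$ in $O$ for every $n$, so $\abs{O}=N$ and $O$ is reasonable.

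The only nonroutine point — and the true crux — is controlling the clause service cost under a single facility change; the proof works precisely because the instance is engineered so that $W$ overwhelms the total clause weight $W'\le W$. A secondary subtlety is verifying that dropping $\bar x_n$ does not silently raise some \emph{other} literal client's cost, which is exactly why I establish the ``at least one per pair'' claim before the ``at most one per pair'' claim, so that each remaining literal client is guaranteed a nearby open facility.
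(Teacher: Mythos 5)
Your proof is correct and takes essentially the same route as the paper, whose Lemmas~\ref{lem:none} and~\ref{lem:both} exhibit exactly your two improving moves (add $x_n$ when a variable pair is uncovered, drop one literal when both are open), driven by the same weight separation $W = M\wmax$ against the total clause weight. The only notable difference is in the drop case: the paper reroutes the clauses of $B(x_n)$ to the still-open negation and bounds the clause-cost increase by $\frac{1}{3}\sum_{b_m\in B(x_n)}w(b_m)\leq\frac{1}{3}W$, which makes that lemma self-contained, whereas your coarser aggregate bound of $\frac{2}{3}W$ forces you to invoke the add-case claim first --- a harmless but avoidable dependence.
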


A detailed proof of Lemma~\ref{lem:mufllocalproperty} can be found in Section~\ref{subsec:muflproof}.
We can combine these results to obtain the correctness of our reduction.

\begin{corollary}\label{cor:correctness}
	If $O$ is locally optimal for $\Phi(B,w)$, then $T_O$ is locally optimal for $(B,w)$.
\end{corollary}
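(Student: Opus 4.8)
The plan is to prove the corollary by contraposition, feeding the two preceding lemmas into a single-variable-flip argument. So I would start by assuming $O$ is locally optimal for $\Phi(B,w)$ and invoking Lemma~\ref{lem:mufllocalproperty} to conclude that $O$ is reasonable; in particular $\abs{O}=N$ and $O$ contains exactly one of $x_n,\bar x_n$ for each $\OneTo{n}{N}$. The goal is then to show that $T_O$ admits no improving flip, i.e. that no change of a single variable's truth value raises the \SATF cost.

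The core of the argument will be a construction step translating a hypothetical SAT improvement into a MUFL improvement. I would assume, for contradiction, that flipping some variable $x_n$ turns $T_O$ into an assignment $T'$ with $w(B,T')>w(B,T_O)$, and then exhibit a reasonable neighbour $O'\in N_{MUFL}(O)$ with $T_{O'}=T'$. Because $O$ is reasonable, exactly one of $x_n,\bar x_n$ lies in $O$; I define $O'$ by replacing that literal with its complement and leaving the other $N-1$ chosen literals unchanged. This $O'$ still contains exactly one literal from each pair, so it is again reasonable with $T_{O'}=T'$, and it drops exactly one facility and opens exactly one, giving $\abs{O\setminus O'}=\abs{O'\setminus O}=1$, hence $O'\in N_{MUFL}(O)$. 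With both $O$ and $O'$ reasonable, Lemma~\ref{lem:muflcost} applies, and from $w(B,T_O)<w(B,T_{O'})$ it yields $\pfl(C,F,O)>\pfl(C,F,O')$, so $O'$ is a strictly cheaper solution in the single-swap neighbourhood of $O$, contradicting local optimality of $O$. Therefore $T_O$ has no improving flip and is locally optimal for $(B,w)$.

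I expect the only subtle point to be the neighbourhood correspondence in the middle step: I must verify that a single variable flip in \SATF maps to a single swap in $N_{MUFL}$ and, crucially, that the constructed $O'$ remains reasonable so that Lemma~\ref{lem:muflcost} is even applicable (this is exactly where the restriction of $\Psi$ to reasonable solutions being a bijection is used). Once that correspondence is nailed down, the remainder is a direct plug-in of Lemmas~\ref{lem:mufllocalproperty} and~\ref{lem:muflcost}, with no further computation required.
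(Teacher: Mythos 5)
Your proof is correct and follows essentially the same route as the paper: invoke Lemma~\ref{lem:mufllocalproperty} to ensure $O$ is reasonable, then translate a hypothetical improving flip of $x_n$ into the reasonable swap $(O\setminus\{x_n\})\cup\{\bar x_n\}$ (or its mirror) and apply Lemma~\ref{lem:muflcost} to contradict local optimality of $O$. Your explicit check that the swapped set stays reasonable and lies in $N_{MUFL}(O)$ is exactly the point the paper's proof uses implicitly, so nothing is missing.
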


\begin{proof}
	Assume to the contrary that $T_O$ is not locally optimal.
	If $O$ is not reasonable, then it is not locally optimal by Lemma~\ref{lem:mufllocalproperty}.
	Therefore, assume that $O$ is reasonable.
	Since $T_O$ is not locally optimal, we know that there exists an $\OneTo{n}{N}$, such that $w(B,T_O^{\bar n}) > w(B,T_O)$, where $T_O^{\bar n}$ denotes $T_O$ with an inverted assignment of the $n^{th}$ variable.
	Since $O^{\bar n} \defeq (O\setminus\{x_n\})\cup\{\bar x_n\}$ is reasonable, $\Psi(B,w,O^{\bar n}) = T_O^{\bar n}$ and by Lemma~\ref{lem:muflcost} we know that
	\[ \pfl(C,F,O^{\bar n}) < \pfl(C,F,O) \ , \]
	and hence can conclude that $O$ is not locally optimal.
\end{proof}

\subsection{Proof of Lemma~\ref{lem:mufllocalproperty}}\label{subsec:muflproof}

The following proof of Lemma~\ref{lem:mufllocalproperty} is presented in two steps.
First, we argue in Lemma~\ref{lem:both} that no locally optimal solution can contain both a literal and its negation.
Second, we show in Lemma~\ref{lem:none} that every locally optimal solution contains a facility corresponding to each of the variables.
Combining these two results gives us Lemma~\ref{lem:mufllocalproperty} as a corollary.
From the following results we can moreover conclude that once the single-swap algorithm has reached a reasonable solution, it will always stay at a reasonable solution.
We take up on this fact in Section~\ref{subsec:tightness}, where we argue on the tightness of our reduction.

\begin{lemma}\label{lem:both}
	If $x_n,\bar x_n\in O$, then $O$ is not locally optimal.
\end{lemma}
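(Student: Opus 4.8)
The plan is to exhibit a single improving \emph{drop} move. Assuming $x_n,\bar x_n\in O$, I would consider the neighbour $O' = O\setminus\{x_n\}$, obtained by closing the facility $x_n$ while keeping $\bar x_n$ open, and show that $\pfl(C,F,O') < \pfl(C,F,O)$. Since dropping a single facility is a legal move in $N_{MUFL}(O)$, this immediately certifies that $O$ is not locally optimal. Crucially, we may \emph{not} assume that $O$ is reasonable here, since this lemma is one of the ingredients used later to establish reasonableness; the argument must therefore hold for an arbitrary $O$ containing the pair $\{x_n,\bar x_n\}$, and in particular I cannot appeal to Claim~\ref{claim:appmuflcost}.

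First I would track the cost change term by term. Closing $x_n$ saves exactly $2W$ in opening cost. The only literal client whose service cost changes is the one located at $x_n$: its nearest facility was $x_n$ at distance $0$, and after the move it is served by $\bar x_n$ at distance $1$, for an increase of $W$. Every other literal client is unaffected, because the facility $\bar x_n$ lies at exactly the same distance from it as $x_n$ did (distance $2$ for the clients at $x_k,\bar x_k$ with $k\ne n$, and distance $0$ for the client at $\bar x_n$), so its minimum distance to an open facility cannot increase.

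Next I would bound the increase in service cost aggregated over the clause clients. A clause client $b_m$ can only be affected if $x_n$ was its unique nearest open facility, which forces $x_n\in b_m$ and hence $d(b_m,x_n)=\frac{4}{3}$. For such a clause the still-open facility $\bar x_n$ lies at distance $\frac{5}{3}$, so the new service distance is at most $\frac{5}{3}$ and the per-unit increase is at most $\frac{1}{3}$. Summing over the clauses in $B(x_n)$ and using $W = M\cdot\wmax \ge \sum_{b_m\in B} w(b_m)$, the total clause increase is at most $\frac{1}{3}\sum_{b_m\in B(x_n)} w(b_m) \le \frac{1}{3}W$.

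Combining the three contributions gives a net change of at most $-2W + W + \frac{1}{3}W = -\frac{2}{3}W < 0$, so the drop is strictly improving and $O$ cannot be locally optimal. I expect the main obstacle to be precisely the clause bookkeeping of the third step: since $O$ need not be reasonable, I must argue directly that $\bar x_n$ \emph{absorbs} every client previously served by $x_n$ at only a slightly larger distance, so that the aggregate service penalty stays strictly below the $2W$ opening saving. The choice of the large literal weight $W = M\cdot\wmax$ is exactly what guarantees this domination.
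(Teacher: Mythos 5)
Your proposal is correct and follows essentially the same route as the paper: drop $x_n$, let the still-open $\bar x_n$ absorb the client at $x_n$ (cost increase $W$) and the clause clients in $B(x_n)$ (per-unit increase at most $\frac{5}{3}-\frac{4}{3}=\frac{1}{3}$, total at most $\frac{1}{3}W$ since $\sum_{b_m\in B} w(b_m)\leq M\cdot\wmax = W$), so the $2W$ opening saving strictly dominates. Your explicit remark that reasonableness may not be assumed, and the observation that no client outside $B(x_n)\cup\{x_n\}$ is strictly closer to $x_n$ than to $\bar x_n$, match the paper's argument exactly.
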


\begin{proof}
	We show that closing the facility located at $x_n$ strictly decreases the cost, and hence that $O$ can not be locally optimal.
	When closing the facility $x_n$, we have to let all clients previously served by this facility (including the client located at $x_n$) be served by another facility.
	Choosing $\bar x_n$ as the replacement, we do not increase the cost by too much.
	More specifically, we can pay the additional cost with the cost we save from not opening $x_n$.
	Recall, that $B(x_n)$ is the set of all clauses containing the literal $x_n$, hence that $\abs{B(x_n)} \leq M$.
	Observe, that no client in $C\setminus B(x_n)$ (except $x_n$) is closer to $x_n$ than it is to $\bar x_n$.
	We obtain
	\begin{align*}
		\pfl(C,F,O) 
		&= \sum_{\substack{c\in C\setminus B(x_n)\\ c\neq x_n}}\omega(c) d(c,O) + \sum_{b_m\in B(x_n)} \omega(b_m) \frac{4}{3} + \abs{O}2W \\
		&> \sum_{\substack{c\in C\setminus B(x_n)\\ c\neq x_n}}\omega(c) d(c,O) + \sum_{b_m\in B(x_n)} \omega(b_m) \frac{5}{3} + W + (\abs{O}-1)2W \\
		&\geq \pfl(C,F,O\setminus\{x_n\}) \ .
	\end{align*}
\end{proof}

\begin{lemma}\label{lem:none}
	If $x_n,\bar x_n \not\in O$, then $O$ is not locally optimal.
\end{lemma}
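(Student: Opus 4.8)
The plan is to mirror the argument of Lemma~\ref{lem:both}: assuming $x_n,\bar x_n\notin O$, I will exhibit a strictly improving neighbour, namely the add move $O' = O\cup\{x_n\}$, which lies in $N_{MUFL}(O)$ since $\abs{O\setminus O'} = 0$ and $\abs{O'\setminus O} = 1$. Opening $x_n$ raises the opening cost by $f\equiv 2W$, so I need to recover strictly more than $2W$ in saved service cost. The key leverage is that the two clients located at $x_n$ and $\bar x_n$ each carry the heavy weight $\omega(x_n) = \omega(\bar x_n) = W = M\cdot\wmax$, which alone dominates the entire clause budget $\sum_{b_m\in B} w(b_m)\leq M\cdot\wmax = W$, so the clause-clients need not even be inspected.

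First I would pin down the two relevant distances. Because every facility is a literal and neither $x_n$ nor $\bar x_n$ lies in $O$, the only facilities available to the clients at $x_n$ and $\bar x_n$ are literals of other variables, all at distance $2$ by the ``else'' case of $d$. Hence, as long as $O\neq\emptyset$, we have $d(x_n,O) = d(\bar x_n,O) = 2$. The degenerate case $O=\emptyset$ is immediate, since then the service cost is infinite and any add strictly improves; alternatively one notes that a local optimum must satisfy $\abs{O}\geq 1$.

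Next I would compute the effect of the add move on these two clients. After opening $x_n$, the client at $x_n$ is served at distance $0$ (saving $W\cdot 2 = 2W$), and the client at $\bar x_n$ is served by $x_n$ at distance $d(x_n,\bar x_n) = 1$ (saving $W\cdot(2-1) = W$). Every other client's service cost can only decrease, since passing from $O$ to $O'\supseteq O$ shrinks the minimised distance $d(c,O')\leq d(c,O)$. Writing out $\pfl(C,F,O)$ and $\pfl(C,F,O')$ as in the proof of Lemma~\ref{lem:both} and bounding the remaining clients by this monotonicity yields $\pfl(C,F,O')\leq \pfl(C,F,O) - (2W + W) + 2W = \pfl(C,F,O) - W$, a strict decrease. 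Thus $O$ is not locally optimal.

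The inequality chain is routine; the only real subtlety, and the step I would be most careful about, is justifying that the nearest open facility to each of $x_n$ and $\bar x_n$ sits at distance exactly $2$ rather than something smaller. This rests precisely on the hypothesis $x_n,\bar x_n\notin O$ together with the fact that $F$ contains only literals, so that no open facility can be at distance $1$, $\frac{4}{3}$, or $\frac{5}{3}$ from a literal-client. With those two distances fixed, the two heavy literal-clients already pay for the newly opened facility with room to spare, which is exactly what makes the add move strictly improving.
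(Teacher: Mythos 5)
Your proof is correct and follows essentially the same route as the paper: both open $x_n$, observe that the two weight-$W$ clients at $x_n$ and $\bar x_n$ sit at distance $2$ from every facility in $O$, and pay the extra opening cost $2W$ out of the service-cost saving $2W + W = 3W$, for a net strict improvement of $W$. Your treatment is if anything slightly more explicit than the paper's, since you also justify why the relevant distances are exactly $2$ and handle the degenerate case $O=\emptyset$.
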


\begin{proof}
	Similar to before, we show that opening a facility at $x_n$ strictly decreases the cost.
	When opening the facility at $x_n$ we have to save enough service cost by serving locations from it, that we can pay for opening the facility.
	Connecting the clients located at $x_n$ and $\bar x_n$ to the newly opened facility is sufficient.
	We obtain
	\begin{align*}
		\pfl(C,F,O) 
		&= \sum_{c\in C\setminus\{x_n,\bar x_n\}} \omega(c) d(c,O) + \underbrace{\sum_{c\in\{x_n,\bar x_n\}} \omega(c) d(c,O)}_{=4W} + \abs{O}2W \\
		&> \sum_{c\in C\setminus\{x_n,\bar x_n\}} \omega(c) d(c,O) + W + (\abs{O}+1)2W \\
		&\geq \pfl(C,F,O\cup\{x_n\}) \ .
	\end{align*}
\end{proof}

\subsection[(Phi,Psi) is a Tight Reduction]{$(\Phi,\Psi)$ is a Tight Reduction}\label{subsec:tightness}

We show that $(\Phi,\Psi)$ is a tight reduction by only considering its behaviour on reasonable solutions.
Lemma~\ref{lem:mufllocalproperty} tells us that restricted to reasonable solutions, the single-swap local search behaves on $\Phi(B,w)$ exactly the same as the flip local search behaves on $(B,w)$.
Additionally, we use the fact that once single-swap has reached a reasonable solution, it will always stay at a reasonable solution.
Formally, we need to find a set of feasible solutions $\cR$ for $(C,\omega,F,2W,d)$, such that
\begin{enumerate}
	\item $\cR$ contains all local optima.
	\item for every feasible solution $T$ of $(B,w)$, we can compute $O\in\cR$ with $T_O = T$ in polynomial time.
	\item if the transition graph $TG(C,\omega,F,2W,d)$ contains a directed path $O\leadsto O'$, with $O,O'\in\cR$ but all internal path vertices outside of $\cR$, then $TG(B,w)$ contains the edge $(T_O, T_{O'})$ or $T_O = T_{O'}$.
\end{enumerate}

Let $\cR$ be the set of all reasonable solutions.
$\cR$ contains all local optima of $(C,\omega,F,2W,d)$, by Lemma~\ref{lem:mufllocalproperty}.
The restriction of $\Psi$ to $\cR$ is bijective and we can obviously compute the inverse in polynomial time.
To prove the final property of tight reductions, we use the following result, which is a byproduct of the proof of Lemma~\ref{lem:mufllocalproperty}.

\begin{corollary}\label{cor:tg}
	If $O\in\cR$ and $O'\not\in\cR$, then $(O,O')\not\in TG(C,\omega,F,2W,d)$.
\end{corollary}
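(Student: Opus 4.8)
The plan is to prove the contrapositive-flavoured statement that every neighbour $O'\in N_{MUFL}(O)$ of a reasonable solution $O$ which is itself \emph{not} reasonable satisfies $\pfl(C,F,O') > \pfl(C,F,O)$. Since a directed edge of the transition graph points only to a strictly cheaper solution, this immediately gives $(O,O')\not\in TG(C,\omega,F,2W,d)$. First I would observe that, for a reasonable $O$ (so $\abs{O}=N$ with exactly one literal per variable), a move in $N_{MUFL}$ is of exactly one of four types: it leaves $O$ unchanged, drops a facility, adds a facility, or swaps one facility for another. The identity move gives no edge, and a swap exchanging $x_i$ for $\bar x_i$ stays within a single variable and produces the reasonable solution $O^{\bar i}$; hence only three move types can leave $\cR$, namely a drop, an add, and a swap between two \emph{distinct} variables. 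I would treat these in turn.

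The key observation I would record is that the proofs of Lemmas~\ref{lem:both} and \ref{lem:none} in fact establish strict cost inequalities, and that both arguments are purely local and symmetric in a literal and its negation: the proof of Lemma~\ref{lem:both} shows $\pfl(C,F,O) > \pfl(C,F,O\setminus\{x_n\})$ whenever both $x_n,\bar x_n\in O$, and the proof of Lemma~\ref{lem:none} shows $\pfl(C,F,O) > \pfl(C,F,O\cup\{x_n\})$ whenever both $x_n,\bar x_n\not\in O$, regardless of whether the ambient solution is reasonable. For a drop $O' = O\setminus\{y\}$, where $y$ is the unique literal of its variable in the reasonable $O$, both literals of that variable are absent from $O'$, so Lemma~\ref{lem:none} applied to $O'$ yields $\pfl(C,F,O) = \pfl(C,F,O'\cup\{y\}) < \pfl(C,F,O')$. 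For an add $O' = O\cup\{z\}$, the variable of $z$ now has both of its literals open, so Lemma~\ref{lem:both} applied to $O'$ yields $\pfl(C,F,O) = \pfl(C,F,O'\setminus\{z\}) < \pfl(C,F,O')$.

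The only genuinely new case, and the one I expect to be the main obstacle, is a swap $O' = (O\setminus\{y_i\})\cup\{z_j\}$ where $y_i$ is a literal of some variable $i$ and $z_j$ is a literal of a distinct variable $j$; in this situation $O'$ contains both literals of $j$ and no literal of $i$. I would resolve it by chaining the two lemmas through the intermediate set $\tilde O = O\setminus\{y_i\} = O'\setminus\{z_j\}$. Since $\tilde O$ has no open literal for variable $i$, Lemma~\ref{lem:none} gives $\pfl(C,F,O) = \pfl(C,F,\tilde O\cup\{y_i\}) < \pfl(C,F,\tilde O)$, and since $O' = \tilde O\cup\{z_j\}$ contains both literals of $j$, Lemma~\ref{lem:both} gives $\pfl(C,F,\tilde O) < \pfl(C,F,O')$. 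Combining the two strict inequalities yields $\pfl(C,F,O) < \pfl(C,F,O')$. Together with the drop and add cases, this shows that every non-reasonable neighbour of $O$ is strictly more expensive than $O$, so no such neighbour can be reached by an improving step, which is exactly the claim of the corollary.
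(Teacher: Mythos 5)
Your proof is correct and follows essentially the same route the paper intends: the paper states Corollary~\ref{cor:tg} as a ``byproduct'' of the proofs of Lemmas~\ref{lem:both} and \ref{lem:none}, i.e.\ precisely the observation that those proofs give strict cost inequalities valid for arbitrary (not necessarily reasonable) solutions, which is what you exploit. Your explicit classification of the non-reasonable neighbours (drop, add, cross-variable swap) and the chaining through the intermediate set $\tilde O$ in the swap case is exactly the detail the paper leaves implicit, and it is carried out correctly.
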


Assume $O\leadsto O'$ is a directed path in $TG(C,\omega,F,2W,d)$, with $O,O'\in\cR$ but all internal path vertices outside of $\cR$.
By Corollary~\ref{cor:tg}, this path consists of the single edge $(O,O')$.
This means that $\pfl(C,F,O) > \pfl(C,F,O')$ and thus, by Lemma~\ref{lem:muflcost}, we obtain $w(B,T_O) < w(B,T_{O'})$.
Hence, we conclude the tightness proof by observing that $(T_O,T_{O'}) \in TG(B,w)$.

\section[The K-Means Reduction]{The $K$-Means Reduction}\label{sec:km}

We complement our results by showing that we can obtain tight PLS-complete\-ness for \KMS, as well.

\begin{proposition}\label{prop:km}
	\SATF $\leq_{PLS}$ \KMS and this reduction is tight.
\end{proposition}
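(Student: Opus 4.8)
The plan is to mirror the MUFL reduction of Proposition~\ref{prop:mufl} as closely as possible, adapting the geometric construction so that squared Euclidean distances play the role that the metric $d$ played before, and so that the cardinality constraint $\abs{O}=K$ replaces the opening-cost penalty. First I would construct $\Phi$ mapping a \SAT instance $(B,w)$ over variables $\{x_n\}_{\OneTo{n}{N}}$ to a \FKM instance. As before each variable contributes two candidate facilities $x_n,\bar x_n$, and we place points corresponding to the literals and to the clauses, so that $C = \{x_n,\bar x_n\}_{\OneTo{n}{N}} \cup B$. The key difference is that locations must now be points in $\R^D$ whose pairwise squared distances reproduce the distance pattern from the MUFL construction (literal to its negation, literal to a clause containing it, literal to a clause containing its negation, and everything else). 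Since we work in squared Euclidean distance and the number of required ``special'' distances is on the order of the number of points, I would embed the points in a space of dimension $D = \Theta(\abs{C})$, for instance by using scaled standard basis vectors or a simplex-like embedding, which is exactly why the reduction needs the dimension to grow with the instance size. I would set $K = N$, forcing every feasible solution to open exactly $N$ facilities, and choose the literal weights large (analogous to $W = M\cdot\wmax$) so that a solution is penalised for ever omitting all facilities of a variable or for doubling up on one variable.

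The second step is to define $\Psi$ exactly as before: set $x_n$ true iff the facility $x_n$ is opened, and declare $O$ \emph{reasonable} if $\abs{O}=N$ and for every $n$ either $x_n$ or $\bar x_n$ is open. I would then reprove, in the squared-distance setting, the three structural facts that carried the MUFL argument: an analogue of Claim~\ref{claim:appmuflcost} giving a closed form for $\pkm(C,O)$ on reasonable $O$ as an affine function of $\sum_{b_m\in B_t(T_O)} w(b_m)$ and $\sum_{b_m\in B_f(T_O)} w(b_m)$; an analogue of Lemma~\ref{lem:muflcost} showing the cost ordering reverses the SAT ordering; and analogues of Lemma~\ref{lem:both} and Lemma~\ref{lem:none} showing that a locally optimal solution can contain neither both $x_n$ and $\bar x_n$ nor neither of them, hence is reasonable. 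For \KMS the neighbourhood only permits swaps of exactly one facility, so Lemmas~\ref{lem:both} and \ref{lem:none} must be recast as a single swap argument: if a variable is covered twice then some other variable is covered zero times, and I would show that a single swap moving the redundant facility onto the uncovered variable strictly decreases the cost. This is the natural place where the two ``add/drop'' moves of MUFL get fused into one swap.

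Once these lemmas are in place, correctness follows exactly as in Corollary~\ref{cor:correctness}: a flip of variable $n$ at a reasonable $O$ corresponds to the single swap $O^{\bar n} = (O\setminus\{x_n\})\cup\{\bar x_n\}$, which is itself reasonable and realises the flipped assignment, so local optimality of $O$ forces local optimality of $T_O$. For tightness I would again take $\cR$ to be the reasonable solutions and verify the same three conditions, the crucial one being the analogue of Corollary~\ref{cor:tg}: no single improving swap can leave $\cR$. Here I must argue that from a reasonable $O$, any swap producing an unreasonable $O'$ (necessarily creating a doubly-covered variable and an uncovered one) is cost-increasing, so the transition graph has no such edge, whence every internal-vertex-free improving path between reasonable solutions is a single edge corresponding to exactly one flip.

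The hard part will be the geometric realisation: unlike the MUFL case, where the distances were simply postulated and checked to be a metric, here I must exhibit an explicit point configuration in $\R^D$ whose \emph{squared} distances match the prescribed values, and the value $\frac{4}{3}$ versus $\frac{5}{3}$ discrimination between a literal and its negation relative to a clause must survive squaring. I expect to spend most of the effort choosing coordinates (and the ambient dimension, which is why $D=\Theta(\abs{C})$ is unavoidable) so that all four distance cases come out to the desired constants simultaneously, and then re-deriving the closed-form cost and the strict inequalities in Lemmas~\ref{lem:both} and \ref{lem:none} with squared distances and the cardinality constraint in place of the opening cost. The remaining bookkeeping — the cost algebra of Lemma~\ref{lem:muflcost} and the tightness verification — should then transfer almost verbatim from Section~\ref{sec:mufl}.
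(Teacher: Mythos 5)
There are two genuine gaps, and they are exactly the places where the paper's proof departs from a verbatim transfer of Proposition~\ref{prop:mufl}.

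First, your local-optimality argument misses the case of open clause points. In \KM the candidate centres are all of $C$, and your $C$ (like the paper's) contains a point for every clause — unlike MUFL, where the facility set $F$ consisted of literal points only. Consequently an unreasonable solution need not contain a doubly-covered variable: for instance, $O$ may consist of one literal point for each of $N-1$ variables plus one clause point, leaving one variable uncovered and no variable doubled. Your recasting of Lemmas~\ref{lem:both} and~\ref{lem:none} as ``doubly covered $\Rightarrow$ some variable uncovered, swap the redundant literal onto it'' therefore does not cover all unreasonable solutions, and the same hole reappears in your tightness argument (a single swap from a reasonable solution can open a clause point, so your analogue of Corollary~\ref{cor:tg} also needs this case). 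The paper is explicit that this is where almost all of the extra work lies: its proof of Lemma~\ref{lem:dkmlocalproperty} splits into Case~1 (some clause point $b_m\in O$, with three subcases according to how many of the four literal points adjacent to $b_m$ are open, each showing a concrete improving swap that ejects $b_m$) and only then Case~2, which is the pigeonhole situation you describe. Without excluding clause points from local optima, correctness of $\Psi$ fails.

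Second, the geometric realisation cannot keep the MUFL constants, and your proposed embedding technique does not work. Scaled standard basis vectors or a regular simplex realise only equilateral (two-valued) distance patterns, whereas you need a four-valued pattern ($1$, $4/3$, $5/3$, $2$) respecting the clause--literal incidence structure; $\ell_2^2$-embeddability is a genuine constraint (by Theorem~\ref{thm:embedding}, the distance matrix must satisfy $u^TMu\leq 0$ for all $u$ with $u\cdot\one_N=0$), and you give no argument that the MUFL matrix satisfies it — with deviations from $1$ as large as $1$, the natural estimate fails. The paper's fix is to change the constants: it uses distances $1$, $1+\epsilon$, $1+c\epsilon$, $1+2\epsilon$ with $1<c<2$ and $\epsilon=\frac{1}{4N+2M}$, so that the matrix is an off-diagonal perturbation of size at most $2\epsilon$ of the matrix with zero diagonal and ones elsewhere; on $\one^\perp$ the latter contributes $-\norm{u}^2$ while the perturbation contributes at most $2\epsilon(2N+M)\norm{u}^2\leq\norm{u}^2$ by Cauchy--Schwarz, so Schoenberg's criterion holds, and a classical multidimensional-scaling result then produces explicit coordinates in polynomial time (this is where $D=\Theta(\abs{C})$ really comes from). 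The price is that all cost inequalities (the analogue of Claim~\ref{claim:appmuflcost}, of Lemma~\ref{lem:muflcost}, and the swap arguments) must be re-derived with these constants — routine, but it means the ``transfer almost verbatim'' plan has to be executed with a different, $\epsilon$-dependent arithmetic rather than with $4/3$ and $5/3$.
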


To prove Proposition~\ref{prop:km}, we can basically use the reduction presented in Section~\ref{subsec:construction}.
We need to change some of the constants involved in the construction to make sure that we find a set of points in $\R^D$ with the required interpoint distances.
However, the general approach stays the same and we obtain essentially the same intermediate results.
In the following, we will point out the differences in the construction of $(\Phi,\Psi)$ and indicate which proofs require adjustments.
After proving the hardness result based on the abstract definition of $C$, we show that there is indeed a point set in $\R^D$ exhibiting the required squared euclidean distances.

\subsection[Modifications to (Phi,Psi)]{Modifications to $(\Phi,\Psi)$}

As before, let $(B,w)$ be a \SAT instance over the variables $\{x_n\}_{\OneTo{n}{N}}$.
We construct an instance $(C,\omega,K)\in$ \KM.
Abstractly define the point set $C = \{x_n,\bar x_n\}_{\OneTo{n}{N}} \cup B$.
The distance function $d:C\times C\rightarrow \R$ is similar to before
\[ d(p,q) = d(q,p) = 
	\begin{cases}
		0 \cif p = q \\
		1 \cif p = x_n \wedge q = \bar x_n \\
		1+\epsilon \cif (p = x_n \vee p = \bar x_n) \wedge q = b_m \wedge p\in b_m \\
		1+c\epsilon \cif (p = x_n \vee p = \bar x_n) \wedge q = b_m \wedge \bar p\in b_m \\
		1+2\epsilon & \text{else,}
	\end{cases}
\]
where $1 < c < 2$ and 
\[ \epsilon = \frac{1}{4N+2M} \ . \]

While the distances are scaled in comparison to the MUFL reduction, the central structure remains unchanged.
The points closest to each other are literals and their negation.
Clauses are closer to literals they contain, than to the literal's negation.
All other point pairs have the same, even larger, distance to each other.

The weight function remains unchanged.
That is, the weight of a point corresponding to a clause is the SAT weight of the clause, the weight of a point corresponding to a (negated) variable is $W = M\cdot\wmax$.
Finally, we choose $K = N$. 
Like the weight function, $\Psi$ remains unchanged. 
We denote $\Phi(B,w) = (C,\omega,N)$.

\subsection{Correctness of the DKM Reduction}

Just as before, we have the problem that $\Psi$ is not injective.
However, we can again solve the problem using the previously introduced notion of reasonable solutions.
While the first condition ($\abs{O} = N$) is trivially fulfilled, we utilize the second property to ensure that $\Psi$ becomes a bijection when being restricted to reasonable solutions.
Moreover, we obtain analog results to Lemma~\ref{lem:muflcost} and \ref{lem:mufllocalproperty}.

\begin{lemma}\label{lem:dkmcost}
	If $O,O'\subset C$ are reasonable solutions for $\Phi(B,w)$, then
	\[ w(B,T_O) < w(B, T_{O'}) \Leftrightarrow \pkm(C,O) > \pkm(C,O') \ . \]
\end{lemma}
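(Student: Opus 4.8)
The plan is to mirror the proof of Lemma~\ref{lem:muflcost} step for step; the only genuinely new work is re-deriving the closed-form cost expression of Claim~\ref{claim:appmuflcost} for the modified, squared-Euclidean distances. Concretely, I would first prove the DKM analogue of Claim~\ref{claim:appmuflcost}: for every reasonable $O$,
\[ \pkm(C,O) = WN + (1+\epsilon)\sum_{b_m\in B_t(T_O)} w(b_m) + (1+c\epsilon)\sum_{b_m\in B_f(T_O)} w(b_m) \ . \]
Since a reasonable $O$ contains exactly one of $x_n,\bar x_n$ for each $\OneTo{n}{N}$, the open facilities are precisely $N$ literals, so every remaining literal point has its negation open at distance $1$ and nothing strictly closer; this accounts for the constant literal service cost $WN$ (recall that, unlike MUFL, DKM carries no opening cost, so this is the only constant term).

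To pin down the clause contributions I would argue on the nearest open facility of each clause point, relying on the ordering $1 < 1+\epsilon < 1+c\epsilon < 1+2\epsilon$, which holds precisely because $1 < c < 2$. A clause $b_m\in B_t(T_O)$ is satisfied, so one of the literals it contains is open; that literal sits at distance $1+\epsilon$, and no facility is closer, giving service cost $w(b_m)(1+\epsilon)$. A clause $b_m\in B_f(T_O)$ is falsified, so both of its literals are closed while their negations are open; these negations sit at distance $1+c\epsilon$, strictly below the distance $1+2\epsilon$ to every other open literal, giving service cost $w(b_m)(1+c\epsilon)$. Summing the two cases yields the displayed formula. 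This case analysis is the only part that is really new, and it is where I expect the main (if routine) obstacle to lie: one must verify that the distance ordering actually forces the claimed nearest facilities and be careful that the bookkeeping differs from the MUFL case because there is no $2WN$ opening term. Note that the specific value $\epsilon = 1/(4N+2M)$ plays no role in this lemma; only $1 < c < 2$ and $\epsilon > 0$ are used here, the precise $\epsilon$ being needed later for the embedding into $\R^D$.

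With the formula in hand the equivalence follows exactly as in Lemma~\ref{lem:muflcost}. Reusing the partition $B_{ab} = B_a(T_O)\cap B_b(T_{O'})$ and the identity~\eqref{eq:localopt}, namely that $w(B,T_O) < w(B,T_{O'})$ is equivalent to $\sum_{b_m\in B_{tf}} w(b_m) < \sum_{b_m\in B_{ft}} w(b_m)$, a direct computation gives
\[ \pkm(C,O') - \pkm(C,O) = (c-1)\epsilon\left(\sum_{b_m\in B_{tf}} w(b_m) - \sum_{b_m\in B_{ft}} w(b_m)\right) \ , \]
because the $B_{tt}$ and $B_{ff}$ contributions carry identical coefficients under $O$ and $O'$ and cancel, while $B_{ft}$ and $B_{tf}$ merely swap the coefficients $1+\epsilon$ and $1+c\epsilon$. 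Since $(c-1)\epsilon > 0$, we conclude that $\pkm(C,O) > \pkm(C,O')$ holds iff $\sum_{b_m\in B_{tf}} w(b_m) < \sum_{b_m\in B_{ft}} w(b_m)$, which by~\eqref{eq:localopt} is equivalent to $w(B,T_O) < w(B,T_{O'})$; the reverse implication is obtained by swapping the roles of $O$ and $O'$. This establishes the stated equivalence.
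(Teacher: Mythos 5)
Your proposal is correct and follows essentially the same route as the paper: it first establishes the closed-form cost expression $\pkm(C,O) = NW + (1+\epsilon)\sum_{b_m\in B_t(T_O)} w(b_m) + (1+c\epsilon)\sum_{b_m\in B_f(T_O)} w(b_m)$ for reasonable solutions (the paper's analogue of Claim~\ref{claim:appmuflcost}), and then repeats the argument of Lemma~\ref{lem:muflcost} via the partition $B_{ab}$ and equivalence~\eqref{eq:localopt}. Your explicit computation of the difference $\pkm(C,O')-\pkm(C,O) = (c-1)\epsilon\bigl(\sum_{b_m\in B_{tf}} w(b_m) - \sum_{b_m\in B_{ft}} w(b_m)\bigr)$ is a slightly cleaner packaging of the paper's inequality chain, but it is the same underlying calculation.
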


\begin{proof}
	First, we proof a claim analog to Claim~\ref{claim:appmuflcost}.
	\begin{claim}
		If $O\subset C$ is reasonable, then
		\[ \pkm(C,O) = NW + (1+\epsilon)\sum_{b_m\in B_t(X_C)} w(b_m) + (1+c\epsilon)\sum_{b_m\in B_f(X_C)} w(b_m) \ . \]
	\end{claim}

	\begin{proof}
		We have that $(T_O)_n = 1$ if $x_n\in O$ and $(T_O)_n = 0$ if $\bar x_n\in O$.
		Simply speaking, there is a one-to-one mapping of the truth assignment of a variable, to whether the corresponding positive or negative point is part of the solution.
		We obtain that $\pkm(\{x_n,\bar x_n\}_{\OneTo{n}{N}},O) = NW$, since each point corresponding to a literal is either in $O$ and has cost $0$, or its negated literal at distance $1$ is in $O$ and it has cost $W$.
		It is easy to see, by definition of the point set and $\Psi$, that the points corresponding to a clause in $B_t(X_C)$ have at least one mean at distance $1+\epsilon$ and that the points corresponding to a clause in $B_f(X_C)$ have two means at distance $1+c\epsilon$ and the rest at distance $1+2\epsilon$.
	\end{proof}
	The subsequent argument is analog to the proof presented for Lemma~\ref{lem:muflcost}.
\end{proof}

Almost all of the additional work required for the DKM correctness goes into the proof of the following lemma.

\begin{lemma}\label{lem:dkmlocalproperty}
	If $O\subset C$ is a locally optimal solution for $\Phi(B,w)$, then $O$ is reasonable.
\end{lemma}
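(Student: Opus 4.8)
The plan is to prove the contrapositive: every $O\subset C$ with $\abs{O}=N$ that is not reasonable admits a strictly improving swap, and hence is not locally optimal. The essential new difficulty compared with Lemma~\ref{lem:both} and Lemma~\ref{lem:none} is that the neighbourhood $N_{DKM}$ forbids pure additions and deletions, so I cannot simply close or open a single facility; every move must remove one point and insert one point at the same time. I would begin with a counting argument telling me which move to make. Let $k$, $j$ and $i$ be the numbers of variables having, respectively, both, exactly one, and none of their two literals in $O$, and let $p=\abs{O\cap B}$ be the number of open clause points. Comparing $2k+j+p=\abs{O}=N$ with $k+j+i=N$ yields $i=k+p$. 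Consequently, if $O$ is not reasonable then $i\ge 1$: there is an \emph{empty} variable $n_0$ with $x_{n_0},\bar x_{n_0}\notin O$, and in addition either a \emph{double} variable (some $n_1$ with $x_{n_1},\bar x_{n_1}\in O$) or an open clause point. In every case the facility I add is a literal of $n_0$; covering $n_0$ saves at least $W(1+\epsilon)$ on that literal's own client and a further $W\epsilon$ on its sibling. The two cases differ only in the facility I remove.

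Second, I would bound the cost of a swap by splitting the clients into the $2N$ literal clients, each of weight $W=M\wmax$, and the clause clients, whose weights sum to at most $W$. Because every interpoint distance lies in $\{0,1,1+\epsilon,1+c\epsilon,1+2\epsilon\}$, removing and inserting one mean changes the service distance of any clause client by at most $\epsilon$, so the total clause contribution to a single swap is at most $\epsilon W$; this is precisely the role of the split. If a double variable $n_1$ exists, I remove $\bar x_{n_1}$ and add $x_{n_0}$. The freed client $\bar x_{n_1}$ is still served by the retained mean $x_{n_1}$ at distance $1$ and thus costs exactly $W$, and one checks that no literal client is orphaned. Hence the net change is at most $-W(1+\epsilon)-W\epsilon+W+\epsilon W=-\epsilon W<0$, which settles this case for every value of $M$.

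The remaining case, where no double variable exists but some clause point $b_{m_0}\in O$ does, is where the real work lies. Here I remove $b_{m_0}$ and add a literal of $n_0$. The client $b_{m_0}$ becomes unserved and now costs at most $(1+2\epsilon)\,w(b_{m_0})\le(1+2\epsilon)\wmax$; at most the two literals contained in $b_{m_0}$ each lose up to $W\epsilon$; and the clause terms again contribute at most $\epsilon W$. Since $\wmax=W/M$, the principal saving $W$ dominates the principal loss $\wmax$, and the $\epsilon$-sized terms are controlled by the choice $\epsilon=\tfrac{1}{4N+2M}$, so a routine estimate gives a strictly negative change whenever $M\ge 2$.

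I expect the main obstacle to be the degenerate regime $M=1$: then the single clause is the only possible clause point and $w(b_{m_0})=\wmax=W$, so the crude bound above is only $\le 0$ and the saving and loss are genuinely comparable. I would resolve this by choosing the added literal of $n_0$ according to its position relative to $b_{m_0}$: if a literal of $n_0$ lies in $b_{m_0}$ I add it, so that $b_{m_0}$ is re-served at the cheaper distance $1+\epsilon$; otherwise $n_0$ is disjoint from the unique open clause and (since $k=0$) from every other mean, so $d(x_{n_0},O)=1+2\epsilon$ and the saving on $x_{n_0}$ improves to $W(1+2\epsilon)$. In both subcases a short computation, using $1<c<2$, produces a strictly negative change. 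Collecting the cases shows that a non-reasonable $O$ is never locally optimal, which is the claim.
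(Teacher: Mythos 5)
Your proof is correct in substance and rests on the same mechanism as the paper's: prove the contrapositive by exhibiting an improving swap that removes either an open clause point or a doubled literal and inserts a literal of an uncovered variable. But the organization is genuinely different, and the differences cut both ways. The paper splits first on whether some clause point is open (its Case 1, with three subcases according to how many of that clause's variables are covered), and only in the complementary case deduces that a doubled variable exists; you instead start from the counting identity $i=k+p$, which makes the existence of the right move transparent, and you prioritize the doubled-variable case. Your Case A is both cleaner and more general than the paper's corresponding Case 2: it tolerates open clause points (the paper's Case 2 crucially assumes there are none, so that $x_n$ sits at distance exactly $1+2\epsilon$ from all of $O$ and the clauses in $B(x_o)$ are served by $x_o$), and your ledger $-W(1+\epsilon)-W\epsilon+W+\epsilon W=-\epsilon W<0$ needs no bookkeeping of $B(x_o)$ at all. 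Conversely, your Case B then runs under the extra hypothesis $k=0$, which you exploit exactly where it is needed. Your isolation of the degenerate regime $M=1$ is a genuine observation rather than an artifact of your cruder bound: the paper's own Case 1.1 estimate $(1+c\epsilon)\omega(b_m)+W+(2+4\epsilon)W<(4+4\epsilon)W$ requires $(1+c\epsilon)\omega(b_m)<W=M\wmax$, hence implicitly $M\geq 2$, so the paper silently assumes away the single-clause instances that you handle explicitly.

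One small accounting slip, which does not break the argument but should be repaired: in Case B it is not only the two literals contained in $b_{m_0}$ that can lose; their negations sit at distance $1+c\epsilon$ from $b_{m_0}$ and may be served by it, so each can lose up to $(2-c)\epsilon W$, raising your $\epsilon$-loss budget from $3\epsilon W$ to at most $5\epsilon W$. This still leaves ample slack for $M\geq 3$ since $5\epsilon M<5/2\leq M-1$; for $M=2$ note that an adjacent literal loses only if its variable is empty (literals of covered variables are served by their sibling at distance $1$ before and after the swap, since $k=0$), and your identity $i=p\leq 2$ then limits how many adjacent variables besides $n_0$ can be empty, restoring the bound. The same observation — only literals of empty variables can lose, and $i=k+p$ controls how many there are — is what makes your $M=1$ subcase (i) come out strictly negative for every $c\in(1,2)$, so it is worth stating it once and using it uniformly.
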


Here, we have to ensure that locally optimal solutions do not contain points corresponding to clauses.
This was not an issue in the MUFL proof, since clauses are not available for opening in that case.

Using these intermediate results we can see that this is a tight reduction following the same arguments presented in Section~\ref{subsec:tightness}

\subsection{Proof of Lemma~\ref{lem:dkmlocalproperty}}

Observe, that each point $b_m\in C$ has exactly two points at distance $1+\epsilon$ and two points at distance $1+c\epsilon$ (the points corresponding to the literals in the clause $b_m$ and their negations, respectively).
In the following, we call these four points \emph{adjacent} to $b_m$.
All the other points have distance $1+2\epsilon$ to $b_m$ and are hence strictly farther away.
Assume to the contrary that there exists an $\OneTo{n}{N}$, such that $x_n,\bar x_n\not\in O$.

\paragraph{Case 1:}
There exists an $\OneTo{m}{M}: b_m \in O$, such that $b_m = \{x_o,x_p\}$ (where one or both of these literals might be negated).
One important observation is that if we exchange $b_m$ for some other location then only its own cost and the cost of its adjacent points can increase.
All other points, which might be connected to $b_m$, are at distance $1+2\epsilon$ and can hence be connected to any other location for at most the same cost.

\paragraph{Case 1.1:}
$x_o,\bar x_o, x_p, \bar x_p \not\in  O$.
Each point adjacent to $b_m$ has weight $W$ and has distance at least $1+\epsilon$ to every other points in $P$.
Hence, we have that $\phi(\{b_m,x_o,\bar x_o, x_p, \bar x_p\}, O) \geq (4+4\epsilon)W$.
However,
\begin{align*} 
\phi(\{b_m,x_o,\bar x_o, x_p, \bar x_p\},\{x_o\}) &\leq (1+c\epsilon)\omega(b_m) + W+(2+4\epsilon)W \\
&< (4+4\epsilon)W \ ,
\end{align*}
and hence $(O\setminus\{b_m\})\cup\{x_o\}$ is in the neighbourhood of $O$ and has strictly smaller cost.

\paragraph{Case 1.2:}
$x_p\in O \vee \bar x_p\in O$ and $x_o,\bar x_o\not\in O$.
In this case, removing $b_m$ from $O$ does not affect the cost of $x_p$ and $\bar x_p$.
We obtain $\phi(\{b_m,x_o,\bar x_o\},O) \geq (2+2\epsilon)W$.
Observe, that
\[ \phi(\{b_m,x_o,\bar x_o\},(O\setminus\{b_m\})\cup\{x_o\}) \leq (1+c\epsilon)\omega(b_m) + W < 2W \ . \]

\paragraph{Case 1.3:}
$x_p\in O \vee \bar x_p\in O$ and $x_o\in O \vee \bar x_o\in O$.
Here we have that removing $b_m$ from $O$ does not affect the cost of its adjacent points at all.
However, similar to before we have $\phi(\{b_m,x_n,\bar x_n\},O) \geq (2+2\epsilon)W$.
Again, we obtain
\[ \phi(\{b_m,x_n,\bar x_n\},(C\setminus\{b_m\})\cup\{x_n\}) \leq (1+c\epsilon)\omega(b_m)+W < 2W \ . \]

\paragraph{Case 2:}
There is no $\OneTo{m}{M}$, such that $b_m\in O$.
Consequently, there is an $\OneTo{o}{N}, o\neq n: x_o,\bar x_o\in O$.
W.l.o.g. assume that $\abs{B(x_o)} < M$ (otherwise just exchange $x_o$ for $\bar x_o$ in the following argument).
Observe that
\[ \phi(B(x_o)\cup\{x_o,x_n,\bar x_n\},O) = (2+4\epsilon)W + (1+\epsilon)\sum_{b_m\in B(x_o)} \omega(b_m) \ . \]
The only points affected by removing $x_o$ from $O$ are $x_o$ and the points corresponding to clauses in $B(x_o)$.
Hence, 
\begin{align*}
\phi(C\setminus(B(x_o)\cup\{x_o,x_n,\bar x_n\}),O) &= \phi(C\setminus(B(x_o)\cup\{x_o,x_n,\bar x_n\}),O\setminus\{x_o\}) \\
&\geq \phi(C\setminus(B(x_o)\cup\{x_o,x_n,\bar x_n\}),(O\setminus\{x_o\})\cup\{x_n\}) \  .
\end{align*}
However, recall that the points in $B(x_o)$ are at distance $(1+c\epsilon)$ from $\bar x_o\in O$.
We obtain
\begin{align*}
	&\phi(B(x_o)\cup\{x_o,x_n,\bar x_n\},(O\setminus\{x_o\})\cup\{x_n\}) \\
	&\leq \phi(B(x_o)\cup\{x_o,\bar x_n\},\{\bar x_o,x_n\}) \\
	&= 2W + (1+\epsilon)\sum_{b_m\in B(x_o)} \omega(b_m)+((c-1)\epsilon)\sum_{b_m\in B(x_o)} \omega(b_m) \\
	&< 2W + (1+\epsilon)\sum_{b_m\in B(x_o)} \omega(b_m)+\epsilon W \\
	&< (2+4\epsilon)W + (1+\epsilon)\sum_{b_m\in B(x_o)} \omega(b_m) \\
	&= \phi(B(x_o)\cup\{x_o,x_n,\bar x_n\},O) \ .
\end{align*}
{\small\hfill$\square$}

\subsection[Embedding C into l22]{Embedding $C$ into $\ell_2^2$}\label{subsec:embedding}

So far, we regarded $C$ as an abstract point set, only given by fixed pairwise interpoint distances.
We show that there is an isometric embedding of $C$ into $\ell_2^2$, that is, a set of points in $\R^D$ exhibiting exactly these interpoint distances as squared Euclidean distance.
In the following, we denote by $\one_D$ the $D$-dimensional vector, where each entry is $1$, and by $\delta_{ij}$ the Kronecker delta.

\begin{theorem}[\cite{schoenberg38}]\label{thm:embedding}
	A distance matrix $M\in\R^{N\times N}$ can be embedded into $\ell_2^2$ if and only if 
	\[ \forall u\in\R^N \text{ with } u\cdot\one_N = 0 \text{ we have } u^TMu \leq 0 \ . \]
\end{theorem}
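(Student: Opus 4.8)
The plan is to prove both directions of the equivalence by passing through the \emph{Gram matrix} associated with the squared distances, exploiting the fact that a real symmetric matrix is a Gram matrix (equivalently, factors as $P^TP$) precisely when it is positive semidefinite. Throughout I assume, as is implicit in the notion of a distance matrix, that $M$ is symmetric with $M_{ii}=0$.

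For the forward direction, suppose $M$ is embeddable, say $M_{ij} = \norm{p_i - p_j}^2$ for points $p_1,\dots,p_N \in \R^D$. Expanding $\norm{p_i-p_j}^2 = \norm{p_i}^2 - 2\,p_i^T p_j + \norm{p_j}^2$ and fixing any $u$ with $u\cdot\one_N = 0$, the two norm-terms contribute $(\sum_j u_j)(\sum_i u_i \norm{p_i}^2)$ and its symmetric counterpart, both of which vanish because $\sum_i u_i = 0$. What remains is $u^T M u = -2\sum_{i,j} u_i u_j\, p_i^T p_j = -2\,\norm{\sum_i u_i p_i}^2 \le 0$, which is exactly the claimed condition.

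For the backward direction I would reconstruct the points by translating so that $p_N$ sits at the origin. Concretely I define the candidate Gram matrix $G \in \R^{(N-1)\times(N-1)}$ by $G_{ij} = \tfrac12(M_{iN} + M_{jN} - M_{ij})$, which is the value $p_i^T p_j$ is forced to take once $p_N = 0$. The crux is to show that $G$ is positive semidefinite. Given any $v \in \R^{N-1}$, I extend it to $u \in \R^N$ by $u_i = v_i$ for $i<N$ and $u_N = -\sum_{i<N} v_i$, so that $u\cdot\one_N = 0$. A short bilinear computation, using $M_{NN}=0$, then yields $v^T G v = -\tfrac12\, u^T M u$, and the hypothesis applied to this particular $u$ gives $v^T G v \ge 0$.

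Once $G$ is known to be positive semidefinite, I finish by diagonalizing $G = Q\Lambda Q^T$ with $\Lambda \ge 0$ and setting $P = \Lambda^{1/2} Q^T$, so that $G = P^T P$; the columns $p_1,\dots,p_{N-1}$ of $P$ together with $p_N = 0$ furnish the embedding into $\R^D$ with $D = \operatorname{rank}(G) \le N-1$. A direct check confirms it reproduces $M$: since $M_{ii}=0$ gives $G_{ii}=M_{iN}$, one computes $\norm{p_i - p_j}^2 = G_{ii} - 2G_{ij} + G_{jj} = M_{iN} - (M_{iN}+M_{jN}-M_{ij}) + M_{jN} = M_{ij}$ and $\norm{p_i - p_N}^2 = G_{ii} = M_{iN}$. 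The main obstacle is the backward direction, and within it the identity $v^T G v = -\tfrac12\, u^T M u$: getting the centering/translation bookkeeping right so that the constraint $u\cdot\one_N = 0$ in the hypothesis lines up exactly with positive semidefiniteness of the Gram matrix is where all the content lies, the subsequent factorization being standard linear algebra.
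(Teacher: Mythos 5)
Your proof is correct, but there is nothing in the paper to compare it against: the paper does not prove this statement at all, it simply imports it as Schoenberg's theorem (the citation \cite{schoenberg38}), and likewise imports the constructive counterpart (Torgerson) for actually computing the embedding. What you have written is, in effect, a self-contained proof of both cited results at once. Your forward direction (expanding $\norm{p_i-p_j}^2$ and killing the norm terms via $\sum_i u_i = 0$) is the standard argument, and your backward direction is the classical double-centering construction: defining $G_{ij} = \tfrac12(M_{iN}+M_{jN}-M_{ij})$, the identity $v^TGv = -\tfrac12 u^TMu$ for the extension $u = (v, -\sum_{i<N} v_i)$ does check out (using symmetry of $M$ and $M_{NN}=0$), so $G$ is positive semidefinite, and the factorization $G = P^TP$ plus the verification $\norm{p_i-p_j}^2 = G_{ii}-2G_{ij}+G_{jj} = M_{ij}$ recovers the embedding. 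Note that your explicit standing assumption that $M$ is symmetric with zero diagonal is genuinely needed in the backward direction (it is what makes $G_{ii}=M_{iN}$ and kills the $u_N^2 M_{NN}$ term), and it is legitimately part of what ``distance matrix'' means in the paper's statement. One cosmetic point: taking $P = \Lambda^{1/2}Q^T$ with all eigenvalues gives an embedding into $\R^{N-1}$; to land in dimension $D = \operatorname{rank}(G)$ you should discard the rows corresponding to zero eigenvalues, but this does not affect correctness.
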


In the following let $M_C$ be the matrix corresponding to the point set $C$.
That is, we chose some ordering of the point set $C = \{c_1,\dots,c_{2N+M}\}$ and set $(M_C)_{i,j} = d(c_i,c_j)$.
Observe, that 
\[ (M_C)_{i,j} = 1 - \delta_{ij} + (d(c_i,c_j)-1)(1-\delta_{ij}) \ , \]
where the second summand is always non-negative.

\begin{lemma}
	$M_C$ can be embedded into $\ell_2^2$.
\end{lemma}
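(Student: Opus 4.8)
The plan is to verify the spectral criterion of Theorem~\ref{thm:embedding} directly. Writing $s = 2N+M = \abs{C}$ for the number of points, I must show that $u^T M_C u \le 0$ for every $u\in\R^{s}$ with $u\cdot\one = 0$. The starting point is the decomposition already recorded above, $(M_C)_{i,j} = (1-\delta_{ij}) + (d(c_i,c_j)-1)(1-\delta_{ij})$, which in matrix form reads $M_C = (\one\one^T - I) + E$, where $I$ is the identity and $E$ is the \emph{excess} matrix with entries $E_{ij} = (d(c_i,c_j)-1)(1-\delta_{ij})$. By construction $E$ is symmetric, vanishes on the diagonal, and every off-diagonal entry lies in $\{0,\epsilon,c\epsilon,2\epsilon\}$, hence in $[0,2\epsilon]$ since $1<c<2$.

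First I would evaluate the quadratic form of the $\one\one^T - I$ summand. Because $u\cdot\one = 0$, one gets $u^T(\one\one^T - I)u = (u\cdot\one)^2 - \norm{u}^2 = -\norm{u}^2$, so that $u^T M_C u = -\norm{u}^2 + u^T E u$. It therefore suffices to show $u^T E u \le \norm{u}^2$ for all admissible $u$; in fact I would prove the stronger statement that this holds for \emph{every} $u\in\R^{s}$, which is equivalent to the largest eigenvalue of the symmetric matrix $E$ being at most $1$.

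The remaining step, and the only place where the specific value of $\epsilon$ enters, is to bound the spectrum of $E$. Since $E$ has zero diagonal, Gershgorin's circle theorem confines every eigenvalue to $[-R,R]$, where $R = \max_i \sum_{j\neq i}\abs{E_{ij}}$ is the maximal absolute row sum. Each row has $s-1 = 2N+M-1$ off-diagonal entries, each at most $2\epsilon$, so $R \le (2N+M-1)\cdot 2\epsilon$; substituting $\epsilon = \tfrac{1}{4N+2M} = \tfrac{1}{2(2N+M)}$ yields $R \le \tfrac{2N+M-1}{2N+M} < 1$. Hence every eigenvalue of $E$ is strictly below $1$, so $u^T E u < \norm{u}^2$ for $u\neq 0$ and thus $u^T M_C u < 0$, while $u=0$ is trivial, which establishes the hypothesis of Theorem~\ref{thm:embedding} and gives the embedding. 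The main obstacle is exactly this eigenvalue bound: the construction is engineered so that all interpoint distances sit in the narrow band $[1,1+2\epsilon]$, and $\epsilon$ is tuned precisely so that the total off-diagonal mass in any row of $E$ stays below $1$ and is dominated by the $-\norm{u}^2$ term arising from the $\one\one^T - I$ part. I would finally remark that the resulting embedding dimension is the number of positive eigenvalues of the centred Gram matrix of $M_C$, which is $O(2N+M)$, in agreement with the earlier claim that the dimension is on the order of the number of points.
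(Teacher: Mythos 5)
Your proof is correct, and it shares the paper's skeleton: the same decomposition $M_C = (\one\one^T - I) + E$ with $E_{ij} = (d(c_i,c_j)-1)(1-\delta_{ij})$, and the same reduction, for $u$ orthogonal to $\one$, to showing $u^T E u \le \norm{u}^2$. Where you diverge is the tool used to close this last step. The paper bounds the quadratic form entrywise: $u^T E u \le 2\epsilon \sum_{i,j}\abs{u_i}\abs{u_j} = 2\epsilon\left(\sum_i \abs{u_i}\right)^2 \le 2\epsilon(2N+M)\norm{u}^2 = \norm{u}^2$, the last inequality by Cauchy--Schwarz and the choice $\epsilon = \tfrac{1}{2(2N+M)}$; this yields $u^T M_C u \le 0$, which is exactly what Theorem~\ref{thm:embedding} requires. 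You instead bound the spectrum of $E$: since $E$ is symmetric with zero diagonal and off-diagonal entries in $[0,2\epsilon]$, Gershgorin's theorem confines its eigenvalues to $[-R,R]$ with $R \le 2\epsilon(2N+M-1) < 1$, and the Rayleigh-quotient characterization of $\lambda_{\max}$ then gives $u^T E u < \norm{u}^2$ for $u \ne 0$. Both routes encode the same design fact, namely that the total off-diagonal ``excess'' in any row is below $1$, and both are valid. Yours is marginally sharper (strict negativity of $u^T M_C u$ for $u\neq 0$, since you count only the $2N+M-1$ off-diagonal entries per row) and packages the key step as a standard, quotable matrix-analysis fact; the paper's Cauchy--Schwarz computation is entirely self-contained and avoids invoking the spectral theorem for symmetric matrices. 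Your closing remark on the embedding dimension (the number of positive eigenvalues of the centred Gram matrix) is a harmless addition consistent with the paper's statement that $D$ is on the order of the number of points.
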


\begin{proof}
	Let $u\in\R^{2N+M}$ with $u\cdot\one_{2N+M} = 0$.
	By Theorem~\ref{thm:embedding}, it suffices to show
	\begin{align*}
		u^TM_Cu 
		&= \sum_{i,j} (M_C)_{i,j} u_i u_j \\
		&= \sum_{i,j} u_i u_j - \sum_{i,j} u_i u_j \delta_{ij} + \sum_{i,j} (d(c_i,c_j)-1) u_i u_j (1-\delta_{ij}) \\
		&= (\underbrace{\sum_i u_i}_{=0})^2 - \sum_i u_i^2 + \sum_{i,j} (d(c_i,c_j)-1) u_i u_j (1-\delta_{ij}) \\
		&\leq -\norm{u}^2 + \sum_{i,j} (d(c_i,c_j)-1) \abs{u_i} \abs{u_j} (1-\delta_{ij}) \\
		&\leq -\norm{u}^2 + 2\epsilon \sum_{i,j} \abs{u_i} \abs{u_j} \\
		&= -\norm{u}^2 + 2\epsilon (\sum_i \abs{u_i})^2 \\
		&\leq -\norm{u}^2 + 2\epsilon (2N+M) \norm{u}^2 = 0 \ ,
	\end{align*}
	where the second to last inequality holds by Cauchy-Schwarz.
\end{proof} 

\begin{theorem}[\cite{torgerson52}]
	If $M$ is a distance matrix embeddable into $\ell_2^2$, then there is a polynomial-time algorithm that computes a matrix $P$ whose rows form a set $\{p_n\}_{\OneTo{n}{N}}$ with $M_{i,j} = \norm{p_i-p_j}^2$.
\end{theorem}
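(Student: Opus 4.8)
The plan is to prove this via classical multidimensional scaling, reducing the construction of the points to a matrix factorisation. The key idea is that placing the points amounts to finding their Gram matrix, which one recovers from the squared-distance matrix $M$ by \emph{double centering}. Concretely, let $J = I - \frac{1}{N}\one_N\one_N^T$ be the centering matrix (symmetric and idempotent, with $J\one_N = 0$) and set $G = -\frac{1}{2} JMJ$. First I would verify the algebraic identity $G_{ii} + G_{jj} - 2G_{ij} = M_{ij}$, which follows from a short computation using only that $M$ is symmetric with zero diagonal; thus if we can write $G = PP^T$ for some $P\in\R^{N\times D}$, the rows $p_1,\dots,p_N$ of $P$ satisfy $\norm{p_i - p_j}^2 = G_{ii} + G_{jj} - 2G_{ij} = M_{ij}$, as required.

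The crucial step is to show that such a factorisation exists, i.e.\ that $G$ is positive semidefinite. Here I would use the hypothesis that $M$ is embeddable into $\ell_2^2$, which by Theorem~\ref{thm:embedding} means $u^T M u \leq 0$ for every $u$ with $u\cdot\one_N = 0$. For an arbitrary $v\in\R^N$, symmetry and idempotence of $J$ give $v^T G v = -\frac{1}{2}(Jv)^T M (Jv)$, and since $(Jv)\cdot\one_N = v^T J\one_N = 0$, the Schoenberg condition yields $v^T G v \geq 0$. Hence $G\succeq 0$.

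It remains to extract $P$ and certify polynomial running time. Since $G$ is symmetric positive semidefinite, an eigendecomposition $G = Q\Lambda Q^T$ with $\Lambda\geq 0$ gives $P = Q\Lambda^{1/2}$ (discarding the zero eigenvalues, so that $D = \mathrm{rank}(G)\leq N-1$, which is consistent with the dimension bound claimed for the reduction), and the rows of $P$ are the sought points. Forming $J$ and $G$ takes polynomially many arithmetic operations, and factoring a PSD matrix is a standard polynomial-time routine.

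I expect the main obstacle to be the \emph{polynomial-time} claim rather than the existence of the embedding: the factorisation through $\Lambda^{1/2}$ introduces square roots, so a fully rational, exact-arithmetic algorithm is not immediate. I would address this either by working in the real-arithmetic model under which the cited result is stated, or by replacing the eigendecomposition with a symmetric $LDL^T$-type factorisation and tolerating irrational coordinates (the squared distances, which are all that the reduction cares about, remain the prescribed rationals). Verifying the double-centering identity and the positive semidefiniteness argument are routine by comparison.
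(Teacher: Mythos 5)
Your proof is correct and is precisely the classical multidimensional-scaling (double-centering) argument that the paper's citation to Torgerson refers to --- the paper states this theorem without proof, and your construction $G = -\frac{1}{2}JMJ$, the verification $G\succeq 0$ via the Schoenberg condition of Theorem~\ref{thm:embedding}, and the factorisation $G = PP^T$ recover exactly that standard argument. Your closing caveat about square roots versus exact rational arithmetic is a legitimate subtlety of the polynomial-time claim, one the paper sidesteps entirely by citation, so nothing in your treatment conflicts with or falls short of what the paper relies on.
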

\section[Application to Fuzzy K-Means]{Application to Fuzzy $K$-Means}

Clustering problems, such as DKM, appear in many applications of machine learning and data mining and are closely related to facility location problems.
Problems like DKM, where each point is assigned to a single location, are sometimes called \emph{hard clustering} problems.
If we allow clusters to overlap, so that a point can be assigned to multiple locations, then we speak of a \emph{soft} clustering.
One popular soft clustering generalization of the $K$-means problem is the \emph{fuzzy} $K$-means problem.
In addition to the $K$ location vectors, the fuzzy $K$-means problem seeks for a set of \emph{memberships} assigning some fraction of each point to each location.
Formally, the goal of the \FKM (DFKM) problem, given $C\subset\R^D$ and $K\in\N$, is to find $O\subset C$ and $r: C\times O \rightarrow \R_{\geq 0}$ minimizing
\[ \pfkm(C,O,r) = \sum_{c\in C} \sum_{o\in O} r(c,o)^2 \norm{c-o}^2 , \ \]
subject to $\forall c\in C: \sum_{o\in O} r(c,o) = 1$.

For each fixed set of locations $O\subset\R^D$ we can compute a membership function $r$ minimizing $\phi_{FKM}(C,O,r)$ in polynomial time \cite{dunn73}.
Hence, we obtain that \FKMS is a PLS problem.
So far, there are no known results on the quality of the solutions produced by the standard local search algorithm of \FKMS.
The reduction presented for \KMS can be generalized to apply to \FKMS, as well.

\begin{theorem}\label{thm:fuzzy}
	\FKMS is tightly PLS-complete.
\end{theorem}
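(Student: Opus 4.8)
The plan is to imitate the \KMS reduction of the previous section, but to reduce from \NAEF (which is tightly PLS-complete \cite{schaeffer91}) rather than from \SATF, and to represent each clause by a \emph{symmetric pair} of points. First I would record that the optimal fuzzy cost of a single point $c$ against a fixed center set $O$ with $c\notin O$ is the harmonic expression $\left(\sum_{o\in O}\norm{c-o}^{-2}\right)^{-1}$, which follows in closed form from the cost-minimizing memberships \cite{dunn73}. Evaluating this on a reasonable solution, every clause point sees its two associated literal-centers at one of the gadget distances and the remaining $N-2$ centers at the common background distance $1+2\epsilon$, so its cost is a fixed function $\alpha_s$ of the number $s\in\{0,1,2\}$ of adjacent literal-centers it sees. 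The decisive analytic fact I would isolate is that $\alpha_s=(S_0+s\delta)^{-1}$ for suitable constants $S_0,\delta>0$, so that $\alpha$ is strictly convex in $s$; in particular $\alpha_0+\alpha_2>2\alpha_1$.

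This convexity is exactly what a single clause point cannot exploit: with one point per clause the three configurations cost $\alpha_2<\alpha_1<\alpha_0$, so being satisfied is not cheapest, and a short example shows that fuzzy local search would then optimize the weighted count of satisfied literals, an objective not equivalent to any clean \SAT variant. I would therefore attach to each positive NAE-clause $(x_i\vee x_j)$ two points of weight $w(b_m)$: one adjacent to $x_i,x_j$ and one adjacent to $\bar x_i,\bar x_j$, keeping the literal points, the pair-distance $1$, the background distance $1+2\epsilon$, $K=N$, and $\Psi$ exactly as in the \KMS construction (with $\epsilon$ rescaled for the larger point set). Under this gadget a \emph{cut} clause, whose two variables lie on different sides, makes each of its two points see one adjacent literal as an open center and the negation of the other, i.e. twice the $\alpha_1$ configuration, costing $2\alpha_1$; an \emph{uncut} clause produces one point in configuration $\alpha_2$ and one in $\alpha_0$, costing $\alpha_0+\alpha_2$. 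By the convexity inequality above the cut cost is strictly smaller, and — the crucial improvement over the single-point version — each of these two values is symmetric and unambiguous. Hence over reasonable solutions $\pfkm(O)$ equals a constant minus a fixed positive multiple of the weight of NAE-satisfied clauses, giving the exact analogue of Lemma~\ref{lem:dkmcost} phrased for the NAE cost of $T_O$.

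With the cost lemma in hand the remainder parallels Section~\ref{sec:km} and the tightness framework of Section~\ref{subsec:tightness}. I would reprove the analogue of Lemma~\ref{lem:dkmlocalproperty}, that every locally optimal solution is reasonable, by the same two-pronged case analysis: no variable may have both its literal points closed, since opening one pays for itself through its two literal clients, and no clause point may be open, since swapping it for an adjacent literal cannot increase the cost of the few affected points — the estimates now carried out with harmonic sums rather than nearest-center distances. Tightness then follows essentially verbatim: reasonable solutions contain all local optima, $\Psi$ restricted to them is a polynomial-time-invertible bijection, non-reasonable solutions are unreachable from reasonable ones, and a fuzzy-improving swap between two reasonable solutions is now genuinely an NAE-improving flip. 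Finally I would re-run the embedding argument of Section~\ref{subsec:embedding}: the enlarged distance matrix still has all off-diagonal entries in $[1,1+2\epsilon]$, so Schoenberg's criterion (Theorem~\ref{thm:embedding}) applies with the identical Cauchy--Schwarz estimate once $\epsilon<\tfrac{1}{4N+4M}$, and Torgerson's theorem then yields explicit points in $\R^D$.

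The main obstacle is precisely the soft-assignment effect flagged above: because the fuzzy cost of a point aggregates over \emph{all} centers, a clause's contribution is sensitive to how it is satisfied, not merely whether it is, which dooms a naive port of the \KMS reduction and breaks both the cost equivalence and the third tightness condition. Overcoming this is the whole point of the symmetric two-point gadget together with the convexity inequality $\alpha_0+\alpha_2>2\alpha_1$, and I expect the only other genuinely new technical work to be the harmonic-sum reworking of the local-optimality lemma, which will be the most calculation-heavy step.
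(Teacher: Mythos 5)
Your proposal follows essentially the same route as the paper: a tight reduction from \NAEF using the symmetric clause-pair gadget (the paper duplicates each clause $b_m=\{x_o,x_p\}$ into $b_m^1=\{x_o,x_p\}$ and $b_m^2=\{\bar x_o,\bar x_p\}$), with your convexity inequality $\alpha_0+\alpha_2>2\alpha_1$ being precisely the paper's key estimate $\Gamma_1+\Gamma_3-2\Gamma_2>0$, followed by the same reasonable-solution local-optimality lemma, tightness argument, and Schoenberg/Torgerson embedding. The only cosmetic difference is that you collapse the intermediate distance $1+c\epsilon$ into the background distance $1+2\epsilon$, which does not alter the structure or validity of the argument.
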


We prove Theorem~\ref{thm:fuzzy} by presenting a tight PLS reduction of \FKMS to \NAE (PNAESAT)/Flip, a variant of the previously introduced \SATF.

\subsection{Preliminaries}

An instance of PNAESAT is a weighted Boolean formula in conjunctive normal form, where each clause consists of $2$ \emph{positive} literals, and the cost of a truth assignment is given by the sum over the weights of all clauses whose two variables differ in their truth value.
As before, in the PLS problem \NAEF, the neighbourhood of an assignment is given by all assignments obtained by changing the truth value of a single variable.
For each clause set $B$ and truth assignment $T$ we denote the NAE cost of $T$ with respect to $B$ by $w_{NAE}(B,T)$. 

\begin{theorem}[\cite{schaeffer91}]
	\NAEF is tightly PLS-complete.
\end{theorem}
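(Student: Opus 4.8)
The plan is to recognize \NAEF as a familiar problem in disguise and then transfer known tight PLS-completeness through an essentially trivial correspondence. First I would observe that a PNAESAT instance is nothing but a weighted \emph{Max Cut} instance: introduce one vertex per variable, one weighted edge $\{x_i,x_j\}$ per (positive, two-literal) clause, and note that a clause is counted by $w_{NAE}$ exactly when its two variables receive different truth values, i.e. when the corresponding edge crosses the cut induced by the partition into true and false variables. Hence $w_{NAE}(B,T)$ equals the weight of the cut determined by $T$, and flipping the truth value of a single variable is precisely moving a single vertex across the cut. So \NAEF and the single-move local search for weighted Max Cut are the same PLS problem up to relabeling.

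Given this identification, membership in PLS is immediate: an assignment encodes a partition, its cost is computable in polynomial time, and the Flip neighbourhood has only $N$ elements, so a better neighbour can be found (or local optimality certified) efficiently. The nontrivial content is hardness together with tightness, which I would obtain by invoking the Sch\"affer--Yannakakis result \cite{schaeffer91} that weighted Max Cut under the single-move (Flip) neighbourhood is tightly PLS-complete. Because the two problems coincide as PLS problems, their transition graphs are isomorphic, so every structural property used in the definition of tightness --- containment of all local optima, polynomial-time computability of preimages, and the path-to-edge condition on internal vertices --- transfers verbatim, yielding \NAEF tightly PLS-complete without any further work.

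If instead one wants a self-contained argument built only from the tools of this paper, I would give a tight reduction from \SATF to \NAEF. Writing each clause indicator as a quadratic pseudo-Boolean function of the $0/1$ truth values, the total satisfied weight becomes a constant plus a quadratic form; I would realize this form as a Max Cut objective by adding a single reference (apex) variable $x_0$ whose incident edges encode the linear terms (using the global-complement symmetry of Max Cut to fix the value of $x_0$) and edges $\{x_i,x_j\}$ encoding the quadratic terms. The back-map $\Psi$ sends a PNAESAT assignment to the \SAT assignment obtained by restricting to the original variables.

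The hard part of this self-contained route is twofold. First, \SAT clauses are \emph{disjunctions} rather than NAE constraints and may contain negated literals, so no clause-by-clause translation exists; one must pass through the pseudo-Boolean representation and check that the resulting edge weights can be turned into nonnegative integers (the apex construction is meant to absorb the stray linear and sign contributions, but verifying nonnegativity and bounding the additive constant is delicate bookkeeping). Second, and more pressing for this paper, one must certify \emph{tightness}: the reference variable must not introduce shortcut moves, so I would pin $x_0$ with sufficiently heavy edges (or argue via the complementation symmetry) and then show that along any path between two canonical solutions whose internal vertices lie outside the image of the reduction, $\Psi$ maps the two endpoints either to equal assignments or to a single Flip edge, which is exactly the third condition of Section~\ref{subsec:tightness}. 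I expect this tightness verification, rather than the weight arithmetic, to be the main obstacle; this is precisely why the cleaner first route, identifying PNAESAT with Max Cut and citing its known tight PLS-completeness, is preferable.
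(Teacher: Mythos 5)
Your first route is exactly how this statement stands in the paper: the theorem is not proved there but quoted from \cite{schaeffer91}, and your identification of \NAEF with weighted Max Cut under the single-vertex-move neighbourhood (variables as vertices, clauses as weighted edges, a clause counted precisely when its edge crosses the cut, a flip being a vertex move) makes the two PLS problems identical up to relabeling, so the tight PLS-completeness of Max Cut/Flip from \cite{schaeffer91} transfers verbatim, transition graphs and all. The self-contained second route via a quadratic pseudo-Boolean encoding with an apex variable is unnecessary here and, as you yourself note, would require delicate tightness bookkeeping; the short argument is correct and matches the paper's treatment.
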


\begin{proposition}\label{prop:fkm}
	\NAEF $\leq_{PLS}$ \FKMS and this reduction is tight.
\end{proposition}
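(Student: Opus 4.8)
The plan is to mirror the \KMS reduction of Section~\ref{sec:km}, reducing this time from \NAEF and reusing the literal/negation gadget, with one structural addition that lets the fuzzy objective inherit the not-all-equal symmetry. For each variable I keep the two points $x_n,\bar x_n$ at distance $1$, each of weight $W=M\wmax$, and I set $K=N$; reasonable solutions and $\Psi$ are defined exactly as before. The new ingredient concerns clauses. A \NAE clause $b_m=\{x_o,x_p\}$ is satisfied precisely when $x_o$ and $x_p$ get different truth values, a condition that is invariant under negating the whole assignment. A single clause point cannot see this symmetry, because in fuzzy $K$-means the agreement cases "both true'' and "both false'' would be charged differently. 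I would therefore represent each clause by \emph{two} points $b_m^{+},b_m^{-}$ of weight $w(b_m)$: $b_m^{+}$ at distance $1+\epsilon$ from $x_o,x_p$ and $1+c\epsilon$ from $\bar x_o,\bar x_p$, and $b_m^{-}$ the other way around (all remaining distances $1+2\epsilon$, $1<c<2$). I shrink $\epsilon$ to $\tfrac{1}{4N+4M}$ to accommodate the larger point set in the embedding step.

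The cost lemma is where most of the work lies. The computational fact behind \cite{dunn73} is that, for a fixed center set $O$, the optimal membership charges point $c$ the cost $\bigl(\sum_{o\in O}d(c,o)^{-1}\bigr)^{-1}$. On a reasonable $O$ the literal points contribute a constant (each variable has one literal at distance $0$, its partner at distance $1$, and the other $N-1$ centers at $1+2\epsilon$, independent of which literal is opened). For a clause copy the $N-2$ irrelevant centers always sit at distance $1+2\epsilon$, so its cost equals $g(s)=(s+S)^{-1}$ with $S=(N-2)/(1+2\epsilon)$ \emph{constant} and $s$ the sum of the two inverse relevant distances. Writing $a=(1+\epsilon)^{-1}$ and $b=(1+c\epsilon)^{-1}$, the pair $\{b_m^{+},b_m^{-}\}$ contributes $g(2a)+g(2b)$ when $o,p$ agree and $2\,g(a+b)$ when they differ. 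Convexity of $g$ gives $2\,g(a+b)<g(2a)+g(2b)$, strictly since $a\neq b$, so each satisfied clause saves the fixed positive amount $\Delta=g(2a)+g(2b)-2\,g(a+b)$. Because $S$ is constant the clause costs add up independently, yielding $\pfkm(C,O)=\text{const}-\Delta\,w_{NAE}(B,T_O)$ on reasonable solutions, which is exactly the biconditional analogue of Lemma~\ref{lem:dkmcost}.

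Next I would prove the analogue of Lemma~\ref{lem:dkmlocalproperty}: every local optimum is reasonable. Following the DKM template I would rule out, in turn, variables with no open literal and open clause points, swapping in an appropriate literal and charging the resulting change against the large weight $W$. The main obstacle is that fuzzy memberships couple every point to every center, so a swapped-out center still receives positive membership from all points; the clean "only adjacent points change'' bookkeeping of the hard-assignment case must be replaced by monotonicity estimates on $g$. The saving grace should again be the gap between $W$ and $\sum_m w(b_m)\le M\wmax=W$, which dominates the $O(\epsilon)$ perturbations introduced by partial memberships, but verifying this carefully across the case distinction is the delicate part of the argument.

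Finally, tightness follows verbatim from the template of Section~\ref{subsec:tightness}: take $\cR$ to be the reasonable solutions, observe that a single swap cannot leave $\cR$ (the byproduct of the local-optimality proof), use the restricted bijection $\Psi|_{\cR}$, and translate transition-graph edges through the cost lemma. The embedding into $\ell_2^2$ is the Schoenberg/Cauchy--Schwarz computation of Section~\ref{subsec:embedding} applied to the new distance matrix, whose off-diagonal entries still lie in $[1,1+2\epsilon]$; it goes through because $2\epsilon(2N+2M)\le 1$ by the choice of $\epsilon$.
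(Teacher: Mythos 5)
Your overall architecture is exactly the paper's: you double each \NAE clause into two mirror-image points (the paper realizes this by replacing $b_m=\{x_o,x_p\}$ with $b_m^1=\{x_o,x_p\}$ and $b_m^2=\{\bar x_o,\bar x_p\}$ and then running the DKM construction), you restrict to reasonable solutions, and your cost lemma is the paper's Lemma~\ref{lem:fkmeanscost}: with $a=(1+\epsilon)^{-1}$, $b=(1+c\epsilon)^{-1}$, $S=\tfrac{N-2}{1+2\epsilon}$, your quantities $g(2a),2g(a+b),g(2b)$ are precisely $\Gamma_1,2\Gamma_2,\Gamma_3$, and your strict-convexity argument for $g(2a)+g(2b)-2g(a+b)>0$ is a cleaner route to the paper's explicit identity \eqref{eq:gamma}. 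The tightness and embedding steps also match.

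The genuine gap is the analogue of Lemma~\ref{lem:dkmlocalproperty} (local optima are reasonable), which is where the paper spends almost all of its effort, and which you defer with parameters that do not support the argument. The paper's case analysis bounds cost changes by substituting non-optimal memberships: when a clause center $b_m^i$ is swapped for a literal $x_o$, the gain extracted from $x_o$ is only $\omega(x_o)\,r(x_o,b_m^i)^2\norm{x_o-b_m^i}^2$, and by Lemma~\ref{lem:lowerboundr} the membership factor is only guaranteed to be $r^2>\tfrac{1}{4N^2}$. To make this diluted gain $W/(4N^2)$ dominate a clause weight plus the $O(\epsilon W)$ perturbations, the paper must inflate the literal weight to $W=4N^2M\wmax$ \emph{and} shrink $\epsilon$ to $\min\bigl\{\tfrac{1}{4N+2M},\tfrac{M-1}{9N^2M}\bigr\}$; the chain of inequalities in its Cases 1.1--1.3 and 2 lives exactly off these two choices. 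You keep $W=M\wmax$ and choose $\epsilon=\tfrac{1}{4N+4M}$ (tuned only for the embedding), and your stated ``saving grace'' --- ``the gap between $W$ and $\sum_m w(b_m)\le M\wmax=W$'' --- is vacuous: those two quantities are equal, so there is no gap, and after the $1/(4N^2)$ membership dilution the effective literal weight $M\wmax/(4N^2)$ can be far \emph{smaller} than a single clause weight $\wmax$ whenever $4N^2>M$. So the paper's style of charging argument fails outright under your parameters; in Case 2, for instance, both the literal gain and the clause-point loss come out at order $W\epsilon/N^2$ with no slack you control. Rescuing $W=M\wmax$ would require replacing the substitution bounds by exact fuzzy-cost computations with very tight constants --- an argument you have not given --- whereas simply adopting the paper's $W$ and $\epsilon$ makes your sketched case distinction go through.
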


The following proof of Proposition~\ref{prop:fkm} will basically use the same reduction previously presented for \KMS.
However, first we need to modify the given PNAESAT instance slightly, and change some of the constants involved.

\subsection[Modification to (B,w)]{Modification to $(B,w)$}

As before, let $(B = \{b_m\}_{\OneTo{m}{M}},w)$ be a PNAESAT instance over the variables $\{x_n\}_{\OneTo{n}{N}}$.
From $B$ we construct a new set of clauses $B'$ which will be the input to our reduction.
For each clause $b_m = \{x_o,x_p\} \in B$ $B'$ contains the two clauses 
\[ b_m^1 = \{x_o,x_p\} \text{ and } b_m^2 = \{\bar x_o, \bar x_p\} \ . \]
We define our reduction function based on $(B',w)$.
We still denote $\wmax = \max_{\OneTo{m}{M}}\{w(b_m)\}$, but redefine $M = \abs{B'}$.

\subsection[Construction of Phi and Psi]{Construction of $\Phi$ and $\Psi$}

Given $(B',w)$, we construct an instance $(C',\omega,K)\in$ \FKM.
As before, abstractly define the point set $C' = \{x_n,\bar x_n\}_{\OneTo{n}{N}} \cup B$.
The distance function $d:C\times C\rightarrow \R$ is again
\[ d(p,q) = d(q,p) = 
	\begin{cases}
		0 \cif p = q \\
		1 \cif p = x_n \wedge q = \bar x_n \\
		1+\epsilon \cif (p = x_n \vee p = \bar x_n) \wedge q = b_m \wedge p\in b_m \\
		1+c\epsilon \cif (p = x_n \vee p = \bar x_n) \wedge q = b_m \wedge \bar p\in b_m \\
		1+2\epsilon & \text{else,}
	\end{cases}
\]
where $1 < c < 2$.
However, we set 
\[ \epsilon = \min\left\{  \frac{1}{4N+2M},\frac{M-1}{9N^2M} \right\} \ . \]

Let the weight function $\omega$ be defined as before
\[
	\omega(c) = 
	\begin{cases}
		w(b_m) \cif c = b_m^i \\
		W & \text{else,}
	\end{cases}
\]
however we choose $W = 4N^2\cdot M\cdot \wmax$.
As before, let $K=N$.
The function $\Psi$ remains unchanged. 

\subsection[Properties of (C',omega,N)]{Properties of $(C',\omega,N)$}

Recall the following important properties of the fuzzy $K$-means objective function.

\begin{lemma}[\cite{dunn73}]
	Let $C\subset\R^D$ and fix any set of means $O\subset\R^D$.
	For each $c\in C$ and $o\in O$ we have that the optimal membership of $c$ to the mean $o$ is given by
	\[ r(c,o) = \frac{\norm{c-o}^{-2}}{\sum_{o'\in O} \norm{c-o'}^{-2}} \ . \]
	Substituting for all optimal memberships we obtain
	\[ \pfkm(C,O) = \sum_{c\in C\setminus O} \frac{\omega(c)}{\sum_{o\in O}\norm{c-o}^{-2}} \ . \]
\end{lemma}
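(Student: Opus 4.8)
The plan is to exploit that the fuzzy $K$-means objective separates over the points of $C$. Since the constraint $\sum_{o\in O} r(c,o) = 1$ couples only the memberships of a single point $c$, we can minimize the contribution of each $c$ independently. Writing $d_o = \norm{c-o}$, the task reduces, for each fixed $c$, to minimizing $\sum_{o\in O} r(c,o)^2 d_o^2$ over the affine simplex $\{\sum_{o\in O} r(c,o) = 1\}$; the weight $\omega(c)$ enters only as a positive multiplicative constant and hence does not affect the minimizer.

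First I would dispose of the degenerate case $c\in O$. Here some $d_o = 0$, and setting $r(c,o) = 1$ for that mean and $0$ elsewhere yields contribution $0$, which is clearly optimal. This is precisely why the final sum ranges over $C\setminus O$. For $c\notin O$ all distances $d_o$ are strictly positive, so the per-point objective is a strictly convex quadratic in the vector $(r(c,o))_{o\in O}$ minimized over an affine subspace; consequently any stationary point of the associated Lagrangian is the unique global minimum, and no saddle-point pathology can arise.

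Next I would carry out the Lagrange computation and substitute back. With $L = \sum_{o\in O} r(c,o)^2 d_o^2 - \lambda\left(\sum_{o\in O} r(c,o) - 1\right)$, stationarity gives $2\,r(c,o)\, d_o^2 = \lambda$, i.e. $r(c,o) = \tfrac{\lambda}{2} d_o^{-2}$. Imposing the constraint fixes $\tfrac{\lambda}{2} = \left(\sum_{o'\in O} d_{o'}^{-2}\right)^{-1}$, which yields the claimed formula $r(c,o) = d_o^{-2} / \sum_{o'\in O} d_{o'}^{-2}$. Writing $S = \sum_{o'\in O} d_{o'}^{-2}$ and inserting these optimal memberships, one computes $\sum_{o\in O} r(c,o)^2 d_o^2 = S^{-2}\sum_{o\in O} d_o^{-4} d_o^2 = S^{-2}\sum_{o\in O} d_o^{-2} = S^{-1}$, so the weighted contribution of $c$ is $\omega(c)/S$. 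Summing over $c\in C\setminus O$ gives $\pfkm(C,O) = \sum_{c\in C\setminus O} \omega(c)\big/\sum_{o\in O}\norm{c-o}^{-2}$.

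The only genuine subtlety is the convexity and uniqueness argument certifying that the Lagrangian stationary point is in fact the global minimizer, combined with the careful separation of the $c\in O$ case where the reciprocal distances blow up; everything else is routine algebra. As this is the classical Dunn characterization, I expect no deeper obstacle beyond stating these two points cleanly.
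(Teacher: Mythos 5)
Your derivation is correct, and there is in fact nothing in the paper to compare it against: the lemma is imported by citation from Dunn's 1973 paper and the author gives no proof. Your argument --- separating the objective over points, discarding the degenerate case $c\in O$ with a zero-cost membership, and running the Lagrange computation on a strictly convex quadratic --- is the standard derivation, and the algebra checks out: $r(c,o)=d_o^{-2}/S$ with $S=\sum_{o'}d_{o'}^{-2}$ gives per-point cost $S^{-2}\sum_o d_o^{-2}=S^{-1}$, hence the claimed formula $\sum_{c\in C\setminus O}\omega(c)/S$.

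Two small points of hygiene. First, the feasible set $\{\sum_{o\in O} r(c,o)=1\}$ you minimize over is an affine hyperplane, not a simplex; since the paper defines $r: C\times O\rightarrow\R_{\geq 0}$, nonnegativity is an actual constraint of the problem. Your Lagrangian treats only the equality constraint, so to be complete you should observe that the stationary point you find has all coordinates strictly positive (each $d_o^{-2}>0$ for $c\notin O$), hence the nonnegativity constraints are inactive there, and strict convexity then certifies it as the global minimizer of the constrained problem as well --- one added sentence closes this. Second, note that $\pfkm(C,O,r)$ as displayed in the paper carries no weights while the substituted formula does; you implicitly worked with the weighted objective $\sum_c \omega(c)\sum_o r(c,o)^2\norm{c-o}^2$, which is the intended reading, and your remark that $\omega(c)>0$ does not affect the per-point minimizer is exactly the right justification. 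With those clarifications your proof is complete and self-contained, which is strictly more than the paper provides.
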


\begin{lemma}\label{lem:lowerboundr}
	For all $O\subset C'$ with $\abs{O} = N$ we have 
	\[ \forall c\in C'\setminus O \; \forall o\in O: r(c,o) > \frac{1}{2N} \ . \]
\end{lemma}

\begin{proof}
	By definition of the interpoint distances in $C'$ we obtain
	\[ r(c,o) = \frac{\norm{c-o}^{-2}}{\sum_{o'\in O}\norm{c-o'}^{-2}} \geq \frac{(1+2\epsilon)^{-1}}{\abs{O}} > \frac{1}{2N} \ . \]
\end{proof}

\subsection[(Phi,Psi) is a PLS-Reduction]{$(\Phi,\Psi)$ is a PLS-Reduction}

\begin{lemma}\label{lem:fkmeanscost}
	If $O\subset C'$ is reasonable, then 
	\[ \pfkm(C',O) = \frac{N+N2\epsilon}{N+2\epsilon} + 2\Gamma_2\sum_{b_m\in B_t(T_O)}w(b_m) + (\Gamma_1+\Gamma_3)\sum_{b_m\in B_f(T_O)}w(b_m) \ , \]
	where
	\[ \Gamma_1 = \frac{1}{\frac{N-2}{1+2\epsilon}+\frac{2}{1+\epsilon}} \text{, }\quad \Gamma_2 = \frac{1}{\frac{N-2}{1+2\epsilon}+\frac{1}{1+\epsilon}+\frac{1}{1+c\epsilon}} \text{, and }\; \Gamma_3 = \frac{1}{\frac{N-2}{1+2\epsilon}+\frac{2}{1+c\epsilon}} \ . \]
\end{lemma}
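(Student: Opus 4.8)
The plan is to compute the fuzzy $K$-means cost of a reasonable solution $O$ by summing the contributions of the three types of points separately, exactly as in the proof of Claim~\ref{claim:appmuflcost} and its DKM analogue, but now using the closed-form cost expression
\[ \pfkm(C',O) = \sum_{c\in C'\setminus O} \frac{\omega(c)}{\sum_{o\in O}\norm{c-o}^{-2}} \]
from the preceding lemma. Because $O$ is reasonable, it contains exactly one of $x_n,\bar x_n$ for each $n$ and no clause points, so $\abs{O}=N$ and every point of $C'\setminus O$ contributes one summand whose denominator is a sum of $N$ inverse squared distances.

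First I would handle the literal points. For each variable $n$, exactly one of $x_n,\bar x_n$ lies in $O$ (contributing $0$) and the other lies in $C'\setminus O$ at distance $1$ from its partner in $O$ and distance $1+2\epsilon$ from the remaining $N-1$ means (since distinct literals of distinct variables sit at the generic distance $1+2\epsilon$, as do a literal and an unrelated clause — but the relevant neighbours here are the other chosen literals). Thus each such point has denominator $\frac{1}{1}+\frac{N-1}{1+2\epsilon}$ and weight $W$; summing over the $N$ variables yields the first term. I would verify that this collapses to $\tfrac{N+N2\epsilon}{N+2\epsilon}$ after multiplying numerator and denominator through by $(1+2\epsilon)$.

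Next I would treat the clause points, splitting them according to whether $b_m\in B_t(T_O)$ or $b_m\in B_f(T_O)$. This is the geometric heart of the calculation and the step I expect to demand the most care. Each clause $b_m=\{x_o,x_p\}$ has four adjacent points: the two contained literals at distance $1+\epsilon$ and their two negations at distance $1+c\epsilon$; all other points sit at $1+2\epsilon$. Under a reasonable $O$, exactly one literal per variable is chosen, so for $b_m$ the solution contains exactly one of $\{x_o,\bar x_o\}$ and one of $\{x_p,\bar x_p\}$. If the clause is satisfied ($b_m\in B_t(T_O)$), then — matching the DKM bookkeeping — one chosen literal is at distance $1+\epsilon$ and one at $1+c\epsilon$, giving denominator $\tfrac{N-2}{1+2\epsilon}+\tfrac{1}{1+\epsilon}+\tfrac{1}{1+c\epsilon}$, i.e. the reciprocal $\Gamma_2$; since each satisfied clause point contributes $\Gamma_2\,w(b_m)$ and the construction of $B'$ pairs each original clause with its NAE-complement so that satisfied clauses arrive in matched pairs, I would argue the satisfied contribution is $2\Gamma_2\sum_{B_t}w(b_m)$. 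If the clause is falsified, both chosen literals are on the same ``side'', yielding either two neighbours at $1+\epsilon$ (denominator giving $\Gamma_1$) or two at $1+c\epsilon$ (denominator giving $\Gamma_3$); the two members of a falsified pair split into one of each case, producing $(\Gamma_1+\Gamma_3)\sum_{B_f}w(b_m)$.

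The main obstacle is precisely this last correspondence: one must check that the adjacency pattern of each clause point under a reasonable $O$ maps correctly onto the three $\Gamma$-denominators and, crucially, that pairing $b_m^1,b_m^2$ in $B'$ makes the satisfied/falsified accounting symmetric so the coefficients come out as $2\Gamma_2$ and $\Gamma_1+\Gamma_3$ rather than as four separate terms. Once the case analysis of which adjacent literals are chosen is pinned down (which is a direct consequence of reasonableness together with the definition of $B'$ and $T_O$), assembling the three groups of summands gives the claimed identity, and I would then note — as in Lemma~\ref{lem:muflcost} — that the subsequent comparison of two reasonable solutions reduces to comparing $\sum_{B_{tf}}w$ against $\sum_{B_{ft}}w$, which is the analogue used to finish the cost lemma.
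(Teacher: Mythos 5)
Your proposal is correct and follows essentially the same route as the paper's own proof: apply the closed-form cost $\sum_{c\in C'\setminus O}\omega(c)/\sum_{o\in O}\norm{c-o}^{-2}$, compute the $N$ unchosen literal points (denominator $1+\tfrac{N-1}{1+2\epsilon}$, giving $\tfrac{N+N2\epsilon}{N+2\epsilon}$), and then observe that NAE-satisfied clauses give both paired points $b_m^1,b_m^2$ the mixed adjacency pattern (hence $2\Gamma_2$), while NAE-falsified clauses split the pair into one all-$(1+\epsilon)$ point and one all-$(1+c\epsilon)$ point (hence $\Gamma_1+\Gamma_3$). Your accounting of the pairing induced by the construction of $B'$ is exactly the case analysis the paper performs, so no gap remains (you even mirror the paper's convention of stating the literal-point term without the weight factor $W$).
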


\begin{proof}
	We have that $(T_O)_n = 1$ if $x_n\in C$ and $(T_O)_n = 0$ if $\bar x_n\in O$.
	Simply speaking, there is a one-to-one mapping of the truth assignment of a variable, to whether the corresponding positive or negative point is part of the solution.
	We obtain that $\pfkm(\{x_n,\bar x_n\}_{\OneTo{n}{N}},O) = \frac{N+N2\epsilon}{N+2\epsilon}$, since each point corresponding to a literal is either in $O$ and has cost $0$, or it has its negated literal at distance $1$ and $N-1$ means at distance $1+2\epsilon$ and hence has cost
	\[ \frac{1}{\frac{N-1}{1+2\epsilon}+1} = \frac{{1+2\epsilon}}{N+2\epsilon} \ . \]
	Recall, that a clause of PNAESAT is satisfied if one variable evaluates to true, and one to false.
	Hence, the points $b_m^1$ and $b_m^2$ corresponding to a clause in $b_m\in B_t(T_O)$ have one mean at distance $1+\epsilon$, one mean at distance $1+c\epsilon$ and $N-2$ means at distance $1+2\epsilon$.
	We obtain, that $\forall b_m\in B_t(T_O): \pfkm(\{b_m^1,b_m^2\},O) = 2w(b_m)/\left(\frac{N-2}{1+2\epsilon}+\frac{1}{1+\epsilon}+\frac{1}{1+c\epsilon}\right)$.
	Conversely, the points $b_m^1$ and $b_m^2$ corresponding to a clause in $B_f(T_O)$ split in two different groups.
	One of them has two means at distance $1+\epsilon$ and the other has two means at distance $1+c\epsilon$, while both have $N-2$ means at distance $1+2\epsilon$.
	We obtain, that $\forall b_m\in B_f(T_O): \pfkm(\{b_m^1,b_m^2\},O) = w(b_m)/\left(\frac{N-2}{1+2\epsilon}+\frac{2}{1+\epsilon}\right)+w(b_m)/\left(\frac{N-2}{1+2\epsilon}+\frac{2}{1+c\epsilon}\right)$.
\end{proof}

\begin{lemma}
	If $O,O'\subset C'$ are reasonable, then
	\[ w_{NAE}(B,T_O) < w_{NAE}(B,T_{O'}) \Leftrightarrow \pfkm(C',O) > \pfkm(C',O') \ . \]
\end{lemma}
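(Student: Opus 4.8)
The plan is to mirror the proof of Lemma~\ref{lem:muflcost} almost verbatim, now starting from the closed-form cost of Lemma~\ref{lem:fkmeanscost} in place of Claim~\ref{claim:appmuflcost}. Since $O$ and $O'$ are both reasonable, their fuzzy costs share the identical additive constant $\frac{N+N2\epsilon}{N+2\epsilon}$, which therefore cancels in any comparison. What survives is a weighted count of NAE-satisfied against NAE-unsatisfied clauses, with coefficient $2\Gamma_2$ attached to $B_t(T_O)$ and coefficient $\Gamma_1+\Gamma_3$ attached to $B_f(T_O)$. The entire equivalence thus collapses to the single quantitative fact that a satisfied clause is strictly cheaper than an unsatisfied one, i.e. $2\Gamma_2 < \Gamma_1 + \Gamma_3$.

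This inequality is the only genuinely new step; in the MUFL and DKM reductions the corresponding comparison was immediate from $4/3 < 5/3$ and $1+\epsilon < 1+c\epsilon$, whereas here the three $\Gamma_i$ are reciprocals of sums and must be compared by hand. I would settle it by convexity. Writing $A \defeq \frac{N-2}{1+2\epsilon} > 0$ and $f(t) = \frac{1}{A+t}$, which is strictly convex on $(0,\infty)$, one reads off $\Gamma_1 = f\!\left(\frac{2}{1+\epsilon}\right)$, $\Gamma_3 = f\!\left(\frac{2}{1+c\epsilon}\right)$, and $\Gamma_2 = f\!\left(\frac{1}{1+\epsilon}+\frac{1}{1+c\epsilon}\right)$. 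The argument of $\Gamma_2$ is exactly the midpoint of the two arguments feeding $\Gamma_1$ and $\Gamma_3$, so strict convexity yields $f(\text{midpoint}) < \tfrac{1}{2}\bigl(f(\tfrac{2}{1+\epsilon}) + f(\tfrac{2}{1+c\epsilon})\bigr)$, that is $2\Gamma_2 < \Gamma_1 + \Gamma_3$. Strictness is guaranteed because $c > 1$ forces $\frac{2}{1+\epsilon} \neq \frac{2}{1+c\epsilon}$, so the two points are distinct.

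Setting $\alpha \defeq 2\Gamma_2 < \beta \defeq \Gamma_1 + \Gamma_3$, I would then reuse the $B_{ab}$ bookkeeping of Lemma~\ref{lem:muflcost}: split $B_t(T_O), B_f(T_O), B_t(T_{O'}), B_f(T_{O'})$ along the four cells $B_{tt}, B_{tf}, B_{ft}, B_{ff}$ and subtract the two cost expressions. Every common cell cancels, leaving
\[ \pfkm(C',O) - \pfkm(C',O') = (\beta-\alpha)\Bigl( \sum_{b_m\in B_{ft}} w(b_m) - \sum_{b_m\in B_{tf}} w(b_m) \Bigr). \]
Because $\beta-\alpha>0$ is a fixed positive constant, the sign of the left-hand side equals that of $\sum_{B_{ft}}w - \sum_{B_{tf}}w$, and the PNAESAT analog of equivalence~\eqref{eq:localopt} identifies this sign with the direction of $w_{NAE}(B,T_O)$ versus $w_{NAE}(B,T_{O'})$, delivering the two-sided equivalence at once. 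The main obstacle is the convexity inequality of the second paragraph; once it is in hand the remainder is the same cancellation already performed for MUFL.
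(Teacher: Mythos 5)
Your proposal is correct, and its skeleton matches the paper's: start from Lemma~\ref{lem:fkmeanscost}, split the clause sets along the four cells $B_{tt},B_{tf},B_{ft},B_{ff}$, let the constant $\frac{N+N2\epsilon}{N+2\epsilon}$ and the common cells cancel, and reduce the whole equivalence to the single inequality $\Gamma_1+\Gamma_3-2\Gamma_2>0$. Where you genuinely diverge is in how that pivotal inequality is established. The paper proves it by direct algebraic manipulation, exhibiting in~\eqref{eq:gamma} the identity
\[ \Gamma_1+\Gamma_3-2\Gamma_2 = \frac{2\left(\frac{1}{1+c\epsilon}-\frac{1}{1+\epsilon}\right)^2}{\left(\frac{N-2}{1+2\epsilon}+\frac{2}{1+\epsilon}\right)\left(\frac{N-2}{1+2\epsilon}+\frac{2}{1+c\epsilon}\right)\left(\frac{N-2}{1+2\epsilon}+\frac{1}{1+\epsilon}+\frac{1}{1+c\epsilon}\right)} \ , \]
which is manifestly positive, and strictly so since $c>1$. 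You instead note that $\Gamma_1,\Gamma_2,\Gamma_3$ are values of $f(t)=1/(A+t)$ with $A=\frac{N-2}{1+2\epsilon}$, and that the argument of $\Gamma_2$ is the exact midpoint of the arguments of $\Gamma_1$ and $\Gamma_3$, so strict convexity of $f$ gives $2\Gamma_2<\Gamma_1+\Gamma_3$. The two arguments are at bottom the same fact (clearing denominators in the convexity inequality reproduces the paper's identity), but yours explains conceptually where the squared term comes from and is less error-prone to verify; moreover, your subtraction formulation, writing $\pfkm(C',O)-\pfkm(C',O')$ as $(\Gamma_1+\Gamma_3-2\Gamma_2)\bigl(\sum_{b_m\in B_{ft}}w(b_m)-\sum_{b_m\in B_{tf}}w(b_m)\bigr)$, delivers the biconditional in one stroke, whereas the paper writes out only one direction of the chain of inequalities and leaves the converse implicit. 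One small repair: your assertion $A>0$ presumes $N>2$; for $N=2$ one has $A=0$, and $f(t)=1/t$ is still strictly convex on $(0,\infty)$, so nothing breaks. What you actually need is strict convexity of $f$ on an interval containing $\frac{2}{1+c\epsilon}$ and $\frac{2}{1+\epsilon}$, i.e. $A+t>0$ there, which holds for every $N\geq 1$.
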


\begin{proof}
	Observe that 
	\begin{align}
		w(B,T_O) < w(B,T_{O'}) \Leftrightarrow \sum_{b_m\in B_{tf}} w(b_m) < \sum_{b_m \in B_{ft}} w(b_m) \ , \label{eq:localoptf}
	\end{align}
	and
	\begin{align}
	\begin{split}
		\Gamma_1+\Gamma_3-2\Gamma_2
		&= \frac{1}{\frac{N-2}{1+2\epsilon}+\frac{2}{1+\epsilon}} + \frac{1}{\frac{N-2}{1+2\epsilon}+\frac{2}{1+c\epsilon}}-\frac{2}{\frac{N-2}{1+2\epsilon}+\frac{1}{1+\epsilon}+\frac{1}{1+c\epsilon}} \\
		&= \frac{2\left(\frac{1}{1+c\epsilon}-\frac{1}{1+\epsilon}\right)^2}{\left(\frac{N-2}{1+2\epsilon}+\frac{2}{1+\epsilon}\right)\left(\frac{N-2}{1+2\epsilon}+\frac{2}{1+c\epsilon}\right)\left(\frac{N-2}{1+2\epsilon}+\frac{1}{1+\epsilon}+\frac{1}{1+c\epsilon}\right)} > 0 \ .
	\end{split}\label{eq:gamma}
	\end{align}
	Hence, using Lemma~\ref{lem:fkmeanscost} we obtain
	\begin{align*}
		\pfkm(C',T_{O'}) 
		= \frac{N+N2\epsilon}{N+2\epsilon} + 2\Gamma_2&\sum_{b_m\in B_t(T_{O'})}w(b_m) + \\
		(\Gamma_1+\Gamma_3)&\sum_{b_m\in B_f(T_{O'})}w(b_m) \\
		= \frac{N+N2\epsilon}{N+2\epsilon} + 2\Gamma_2&\sum_{b_m\in B_{tt}}w(b_m) + 2\Gamma_2\sum_{b_m\in B_{ft}}w(b_m) +\\
		(\Gamma_1+\Gamma_3)&\sum_{b_m\in B_{tf}}w(b_m) + (\Gamma_1+\Gamma_3)\sum_{b_m\in B_{ff}}w(b_m)\\
		\overset{\eqref{eq:localoptf},\eqref{eq:gamma}}{<} \frac{N+N2\epsilon}{N+2\epsilon} + 2\Gamma_2&\sum_{b_m\in B_{tt}}w(b_m) + (\Gamma_1+\Gamma_2)\sum_{b_m\in B_{ft}}w(b_m) +\\
		2\Gamma_2&\sum_{b_m\in B_{tf}}w(b_m) + (\Gamma_1+\Gamma_3)\sum_{b_m\in B_{ff}}w(b_m)\\
		= \frac{N+N2\epsilon}{N+2\epsilon} + 2\Gamma_2&\sum_{b_m\in B_t(T_O)}w(b_m) + \\
		(\Gamma_1+\Gamma_3)&\sum_{b_m\in B_f(T_O)}w(b_m) \\
		= \pfkm(C',T_O) \ .
	\end{align*}
\end{proof}

\begin{lemma}
	If $O$ is locally optimal for $(C',\omega,N)$, then $O$ is reasonable
\end{lemma}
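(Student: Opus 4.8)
The plan is to prove the contrapositive: assuming that $O$ is \emph{not} reasonable, I would exhibit a single swap that strictly lowers $\pfkm(C',\cdot)$, contradicting local optimality. Since $\abs{O}=N=K$, the only way to violate reasonableness is to fail the covering condition, so by pigeonhole there is an index $\OneTo{n}{N}$ with $x_n,\bar x_n\notin O$. From here I would follow exactly the case distinction used in the proof of Lemma~\ref{lem:dkmlocalproperty}: either (Case~1) some clause point $b_m^i\in O$, in which case I swap $b_m^i$ out for a suitable literal, splitting as before into the sub-cases according to how many of the four literals adjacent to $b_m$ already lie in $O$; or (Case~2) $O$ consists solely of literal points, in which case pigeonhole yields a doubly covered variable $o\neq n$ with $x_o,\bar x_o\in O$, and I swap $x_o$ (or $\bar x_o$) out for $x_n$. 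The feasible neighbours produced are the \KMS{}-neighbours of $O$, so it suffices to estimate the cost change of each.

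The computations differ from the DKM case because the fuzzy objective is non-local. Using the closed-form expression $\pfkm(C',O)=\sum_{c\in C'\setminus O}\omega(c)/\sum_{o\in O}\norm{c-o}^{-2}$ (\cite{dunn73}) together with the fact that the embedding of Section~\ref{subsec:embedding} realizes $\norm{c-o}^2=d(c,o)$, every summand depends on \emph{all} $N$ means through its denominator $S_c=\sum_{o\in O}1/d(c,o)$. I would therefore track how each $S_c$ moves under the swap, relying on two facts: all interpoint distances lie in $[1,1+2\epsilon]$, so $S_c\in[N/(1+2\epsilon),N]$ and $r(c,o)>1/(2N)$ by Lemma~\ref{lem:lowerboundr}; and a literal-for-literal swap perturbs $S_c$ only for the points adjacent to the two literals involved, and then only by $O(\epsilon)$, whereas removing a clause point from $O$ deletes a near-unit term $1/d(c,b_m^i)$ from the denominators of its at most four adjacent points.

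The decisive bookkeeping is carried by the weight-$W$ literal points, and the two cases behave quite differently. In Case~2 the leading $\Theta(W/N)$ contributions cancel: serving $x_n$ removes a cost of $W(1+2\epsilon)/N$, un-serving $x_o$ (which keeps $\bar x_o\in O$ at distance $1$) only adds $W(1+2\epsilon)/(N+2\epsilon)$, and $\bar x_n$ also becomes cheaper, so the net literal gain is $\Theta(W\epsilon/N^2)$; the remaining literals are unaffected because their distance to both $x_o$ and $x_n$ is $1+2\epsilon$, leaving their denominators unchanged. In Case~1 the gain is cruder, of order $W/N$, since a previously unserved literal is brought into $O$, against losses of order $W/N^2$ from deleting the clause mean and raising its adjacent literals' costs. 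In both cases the residual clause perturbations are bounded by $O(M\wmax\,\epsilon/N^2)$, as each of the $M$ clause points has weight at most $\wmax$ and its denominator $\approx N$ moves by $O(\epsilon)$.

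The whole argument thus reduces to showing that the literal gain strictly exceeds the clause perturbation, which is precisely what the parameter choices buy: $W=4N^2M\wmax$ makes the literal gain a factor $\Theta(N^2)$ larger than the clause terms, and the branch $\epsilon\le(M-1)/(9N^2M)$ forces the inequalities to be strict with a uniform margin over all $N$ and $M$. I expect the main obstacle to be exactly this non-locality: unlike the DKM proof one cannot simply assert that "only adjacent points are affected," so the proof must carry the $O(\epsilon)$ denominator perturbations of \emph{every} point through each sub-case and confirm that the chosen constants dominate them; the delicate point is the near-cancellation in Case~2, where the available gain is only $\Theta(W\epsilon/N^2)$ and must still beat the clause terms. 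The sub-case split on whether the adjacent literals are already covered is needed solely to guarantee that the literal swapped in is genuinely unserved beforehand, which is what makes its gain available.
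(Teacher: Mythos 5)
Your proposal is correct, and strategically it is the paper's proof: the same contradiction setup, the same two cases (a clause point $b_m^i\in O$, or else a doubly covered variable $x_o,\bar x_o\in O$ by pigeonhole), the same sub-cases, and the same swaps. Where you genuinely diverge is in how the non-locality of the fuzzy objective is handled, and the two devices are worth contrasting. The paper never tracks denominators; instead it upper-bounds the cost of the swapped solution by \emph{reusing the old memberships}, redirecting each point's membership in the removed mean to the swapped-in point, and then extracts the gain via Lemma~\ref{lem:lowerboundr}. In Case 1 this credits the swap only $\omega(x_o)\,r(x_o,b_m^i)^2\norm{x_o-b_m^i}^2 > \frac{1}{4N^2}(1+\epsilon)W=(1+\epsilon)M\wmax$ of gain --- a $\Theta(1/N^2)$ fraction of the swapped-in literal's true cost --- against the crude loss bound $2W\epsilon$ obtained from $r\leq 1$; this is precisely why the branch $\epsilon\leq\frac{M-1}{9N^2M}$ of the definition of $\epsilon$ is needed to close Case 1. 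Your closed-form accounting $\pfkm(C',O)=\sum_{c\in C'\setminus O}\omega(c)/S_c$ credits the full old cost $W/S_{x_o}\geq W(1+\epsilon)/N$ of the literal brought into $O$, so your Case 1 closes with a factor-$N$ more slack and leans much less on that branch; your Case 2 coincides with the paper's (the $\Theta(W\epsilon/N^2)$ gain from the near-cancellation beats the clause perturbation, which the paper bounds more crudely by $\epsilon M\wmax=\epsilon W/(4N^2)$ and then compares against $\frac{\epsilon W/N}{N+2\epsilon}$). One small slip that does not affect correctness: swapping a clause mean out for a literal does not ``delete a near-unit term'' from the adjacent literals' denominators --- that term is \emph{replaced} by the swapped-in literal's near-unit term, so each $S_c$ shifts only by $O(\epsilon)$, giving losses $O(W\epsilon/N^2)$, which is even smaller than the $O(W/N^2)$ you budgeted for.
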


\begin{proof}
	Still, each $b_m^i\in C'$ has four adjacent points.
	Assume to the contrary that there is an $\OneTo{n}{N}$, such that $x_n,\bar x_n\not\in O$.
	
	\paragraph{Case 1:}
	$b_m^i\in O$, $b_m^i = \{x_o,x_p\}$.
	Similar to before, we observe that exchanging $b_m^i$ for some other mean can only increase the cost of its adjacent points.
	All other points have $b_m^i$ as a mean at distance $1+2\epsilon$, thus an exchange can not move the mean farther away, hence their cost can not increase.
	
	\paragraph{Case 1.1:}
	$x_o,\bar x_o, x_p,\bar x_p\not\in O$.
	Let $r$ be the optimal memberships of $C'$ with respect to $O$, $\tr$ be the optimal memberships with respect to $\tO = (O\setminus\{b_m^i\})\cup\{x_o\}$, denote 
	\[ \textstyle O(x) = \sum_{o\in O\setminus\{b_m^i\}} r(x,o)^2 \norm{x-o}^2 \quad\text{and}\quad \tO(x) = \sum_{o\in O\setminus\{b_m^i\}} \tr(x,o)^2 \norm{x-o}^2 \ . \]
	\begin{align*}
		&\pfkm(C',O) - \pfkm(C',\tO) \\
		\geq &\pfkm(\{b_m^i,x_o,\bar x_o,x_p,\bar x_p\},O) - \pfkm(\{b_m^i,x_o,\bar x_o,x_p,\bar x_p\},\tO) \\
		= &\sum_{x\in\{x_o,\bar x_o,x_p,\bar x_p\}} \omega(x)(r(x,b_m^i)^2\norm{x-b_m^i}^2 + O(x)) - \\
		&\sum_{x\in\{b_m^i,\bar x_o,x_p,\bar x_p\}} \omega(x)(\tr(x,x_o)^2 \norm{x-x_o}^2 + \tO(x))
	\end{align*}
	Recall, that we can only increase the cost by substituting for non-optimal memberships.
	\begin{align*}
		\geq &\sum_{x\in\{x_o,\bar x_o,x_p,\bar x_p\}} \omega(x)(r(x,b_m^i)^2 \norm{x-b_m^i}^2 + O(x)) - \\
		&\sum_{x\in\{b_m^i,\bar x_o,x_p,\bar x_p\}} \omega(x)(r(x,b_m^i)^2 \norm{x-x_o}^2 + O(x)) \\
		= &\omega(x_o)(r(x_o,b_m^i)^2\norm{x_o-b_m^i}^2 + O(x_o)) - \\
		&\omega(b_m^i)\underbrace{r(b_m^i,b_m^i)^2}_{=1}(\norm{b_m^i-x_o}^2 +\underbrace{O(b_m^i)}_{=0}) + \\
		 &\sum_{x\in\{\bar x_o, x_p, \bar x_p\}} \omega(x)r(x,b_m^i)^2 (\norm{x-b_m^i}^2 - \norm{x-x_o}^2) \\ 
		 \geq &(r(x_o,b_m^i)^2 W - \omega(b_m^i))\norm{x_o-b_m^i}^2 + Wr(\bar x_o,b_m^i)^2c\epsilon + \\
		 &W(\epsilon-2\epsilon)(r(x_p,b_m^i)^2 + r(\bar x_p,b_m^i)^2)
	\end{align*}
	Using Lemma~\ref{lem:lowerboundr} we obtain
	\begin{align*}
		\geq &(\frac{1}{4N^2}W - \omega(b_m^i))\norm{x_o-b_m^i}^2 - 2W\epsilon = \\
		&(M\wmax - \omega(b_m^i))\norm{x_o-b_m^i}^2 - 2W\epsilon \geq (M-1)\wmax(1+\epsilon) - 2W\epsilon \ .
	\end{align*}
	Since $0 < \epsilon \leq \frac{M-1}{9N^2M}$ we finally obtain
	\[ \geq (M-1)\wmax - 8N^2M\wmax\frac{M-1}{9N^2M} > 0 \ . \]
	
	\paragraph{Case 1.2:}
	$\bar x_p\in O$ and $x_o,\bar x_o\not\in O$.
	\begin{align*}
		&\pfkm(C',O) - \pfkm(C',\tO) \\
		\geq &\pfkm(\{b_m^i,x_o,\bar x_o,x_p,\bar x_p\},O) - \pfkm(\{b_m^i,x_o,\bar x_o,x_p,\bar x_p\},\tO) \\
		\geq &\sum_{x\in\{x_o,\bar x_o, x_p\}} \omega(x)(r(x,b_m^i)^2 \norm{x-b_m^i}^2 + O(x)) - \\
		&\sum_{x\in\{b_m^i,\bar x_o, x_p,\}} \omega(x)(r(x,b_m^i)^2 \norm{x-x_o}^2 + O(x)) \\
		= &W(r(x_o,b_m)^2\norm{x_o-b_m^i}^2+O(x_o))- \\
		&\omega(b_m^i)\underbrace{r(b_m^i,b_m^i)^2}_{=1}(\norm{b_m^i- x_o}^2+\underbrace{O(b_m^i)}_{=0}) + \\
		&\sum_{x\in\{\bar x_o, x_p\}}\omega(x)r(x,b_m^i)^2(\norm{x-b_m^i}^2-\norm{x-x_o}^2) \\
		\geq &(\frac{1}{4N^2}W-\omega(b_m^i))\norm{x_o-b_m^i}^2 + Wr(\bar x_o,b_m^i)^2\epsilon + Wr(x_p,b_m^i)^2(\epsilon - 2\epsilon)\\
		\geq &(M\wmax - \omega(b_m^i))\norm{x_o-b_m^i}^2 - \epsilon W \\
		\geq &(M-1)\wmax\norm{x_o-b_m^i}^2 - \epsilon W \geq (M-1)\wmax(1+\epsilon) - \epsilon W
	\end{align*}
	Since $0 < \epsilon \leq \frac{M-1}{9N^2M}$ we finally obtain
	
	\[ \geq (M-1)\wmax - 4N^2M\wmax\frac{M-1}{9N^2M} > 0 \ . \]
	
	\paragraph{Case 1.3:}
	$\bar x_p\in O$, $\bar x_o\in O$.
	\begin{align*}
		&\pfkm(C',O) - \pfkm(C',(C\setminus \{b_m^i\})\cup\{x_n\}) \\
		\geq &\pfkm(\{b_m^i,x_o,\bar x_o,x_p,\bar x_p,x_n,\bar x_n\},O) - \\
		&\pfkm(\{b_m^i,x_o,\bar x_o,x_p,\bar x_p,x_n,\bar x_n\},(C\setminus \{b_m\})\cup\{x_n\}) \\
		\geq &\sum_{x\in\{x_o, x_p,x_n,\bar x_n\}} \omega(x)(r(x,b_m^i)^2 \norm{x-b_m^i}^2 + O(x)) - \\
		&\sum_{x\in\{b_m^i,x_o, x_p,\bar x_n\}} \omega(x)(r(x,b_m^i)^2 \norm{x-x_n}^2 + O(x)) \\
		= &W(r(x_n,b_m^i)^2\norm{x_n-b_m^i}^2+O(x_n)) - \\
		&\omega(b_m^i)\underbrace{r(b_m^i,b_m^i)^2}_{=1}(\norm{b_m^i-x_n}^2+\underbrace{O(b_m^i)}_{=0}) + \\
		&\sum_{x\in\{x_o,x_p,\bar x_n\}}\omega(x)r(x,b_m^i)^2(\norm{x-b_m^i}^2-\norm{x-x_n}^2) \\
		\geq &(\frac{1}{4N^2}W-\omega(b_m^i))(1+2\epsilon) + Wr(\bar x_n,b_m^i)^2 2\epsilon + \\
		&W(r(x_o,b_m^i)^2+r(x_p,b_m^i)^2)(\epsilon - 2\epsilon) \\
		\geq &(M\wmax - \omega(b_m^i))(1+2\epsilon) - 2\epsilon W \geq (M-1)\wmax(1+2\epsilon) - 2\epsilon W
	\end{align*}
	Since $0 < \epsilon \leq \frac{M-1}{9N^2M}$ we finally obtain
	
	\[ \geq (M-1)\wmax - 8N^2M\wmax\frac{M-1}{9N^2M} > 0 \ . \]
	
	\paragraph{Case 2:}
	There is no $\OneTo{m}{M}, \OneTo{i}{3}: b_m^i\in O$, hence $\exists o, o\neq n: x_o,\bar x_o\in O$.
	$B(x_o) = \{b_m^i\in B \;|\; x_o\in b_m^i \vee \bar x_o\in b_m^i\}$, $\abs{B(x_o)} < M$ (otherwise just exchange $x_o$ for $\bar x_o$ in the following argument).
	Observe that
	\begin{align*}
		\pfkm(B(x_o)\cup\{x_o,x_n,\bar x_n\},O) = &\underbrace{\frac{W(2+4\epsilon)}{N}}_{\pfkm(\{x_n,\bar x_n\},O)} + \\
		&\sum_{b_m^i\in B(x_o)} \omega(b_m^i) \sum_{c\in C} r(b_m^i,c)^2\norm{b_m^i - c}^2 \ . 
	\end{align*}
	
	The only points whose cost can increase by removing $x_o$ from $C$ are the points in $B(x_o)$ and $x_o$ itself.
	All points in $C\setminus (B(x_o)\cup\{x_o\})$ are at distance $1+2\epsilon$ anyways, and hence exchanging $x_o$ for some other mean can not increase their cost.
	We obtain
	\begin{align*}
		&\pfkm(B(x_o)\cup\{x_o,x_n,\bar x_n\},(C\setminus\{x_o\})\cup\{x_n\}) \\
		\leq &\underbrace{\frac{W(2+4\epsilon)}{N+2\epsilon}}_{\pfkm(\{x_o,\bar x_n\},(C\setminus\{x_o\})\cup\{x_n\})} + \\
		&\sum_{b_m^i\in B(x_o)} \omega(b_m^i) \left( r(b_m^i,x_o)^2 \underbrace{\norm{b_m^i-x_n}^2}_{=1+2\epsilon \leq 1+2\epsilon} + \sum_{c\in C\setminus\{x_o\}} r(b_m^i,c)^2\norm{b_m^i - c}^2\right) \\
		\leq &\frac{W(2+4\epsilon)}{N+2\epsilon} + \\
		&\sum_{b_m^i\in B(x_o)} \omega(b_m^i) \left( r(b_m^i,x_o)^2 (\norm{b_m^i-x_o}^2+\epsilon) + \sum_{c\in C\setminus\{x_o\}} r(b_m^i,c)^2\norm{b_m^i - c}^2\right) \\
		= &\frac{W(2+4\epsilon)}{N+2\epsilon} + \epsilon\sum_{b_m^i\in B(x_o)} \omega(b_m^i)r(b_m^i,x_o)^2 + \\
		&\sum_{b_m^i\in B(x_o)} \omega(b_m^i) \sum_{c\in C} r(b_m^i,c)^2\norm{b_m^i - c}^2 \\
		< &\frac{W(2+4\epsilon)}{N+2\epsilon} + \epsilon M\wmax + \sum_{b_m^i\in B(x_o)} \omega(b_m^i) \sum_{c\in C} r(b_m^i,c)^2\norm{b_m^i - c}^2 \\
		= &\frac{W(2+4\epsilon)}{N+2\epsilon} + \epsilon \frac{W}{4N^2} + \sum_{b_m^i\in B(x_o)} \omega(b_m^i) \sum_{c\in C} r(b_m^i,c)^2\norm{b_m^i - c}^2 \\
		< &\frac{W(2+4\epsilon)}{N+2\epsilon} + \frac{\frac{\epsilon}{N}W}{N+2\epsilon} + \sum_{b_m^i\in B(x_o)} \omega(b_m^i) \sum_{c\in C} r(b_m^i,c)^2\norm{b_m^i - c}^2 \\
		< &\frac{W(2+4\epsilon)+\frac{2\epsilon(2+4\epsilon)}{N}W}{N+2\epsilon} + \sum_{b_m^i\in B(x_o)} \omega(b_m^i) \sum_{c\in C} r(b_m^i,c)^2\norm{b_m^i - c}^2 \\
		= &\frac{W(2+4\epsilon)(1+\frac{2\epsilon}{N})}{N(1+\frac{2\epsilon}{N})} + \sum_{b_m^i\in B(x_o)} \omega(b_m^i) \sum_{c\in C} r(b_m^i,c)^2\norm{b_m^i - c}^2 \\
		= &\frac{W(2+4\epsilon)}{N} + \sum_{b_m^i\in B(x_o)} \omega(b_m^i) \sum_{c\in C} r(b_m^i,c)^2\norm{b_m^i - c}^2 \\
		= &\pfkm(B(x_o)\cup\{x_o,x_n,\bar x_n\},O)
	\end{align*}
\end{proof}

\begin{corollary}
	If $O$ is locally optimal for $(C',\omega,N)$, then $T_O$ is locally optimal for $(B,w)$.
\end{corollary}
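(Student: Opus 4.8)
The plan is to obtain this corollary by contradiction, combining the two structural lemmas just established for $(C',\omega,N)$ in precisely the way Corollary~\ref{cor:correctness} is derived in the MUFL case. I would assume that $O$ is locally optimal for the DFKM instance $(C',\omega,N)$ but that $T_O$ is nevertheless not locally optimal for the PNAESAT instance $(B,w)$. The first step disposes of the non-reasonable case: if $O$ were not reasonable, then the immediately preceding lemma (every locally optimal solution of $(C',\omega,N)$ is reasonable) would contradict the assumed local optimality of $O$. Hence it suffices to treat reasonable $O$, for which $\Psi$ restricted to reasonable solutions is a bijection with $\Psi(B,w,O) = T_O$.

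Assuming $O$ reasonable, since $T_O$ is not locally optimal for \NAEF there is a variable $x_n$ whose flip strictly improves the NAE objective, i.e. $w_{NAE}(B,T_O^{\bar n}) > w_{NAE}(B,T_O)$, where $T_O^{\bar n}$ denotes $T_O$ with the $n$-th variable inverted. The central point is to realise this flip as a single swap in $O$: by reasonableness exactly one of $x_n,\bar x_n$ lies in $O$, and exchanging it for its negation produces a solution $O^{\bar n}\in N_{DKM}(O)$ that is again reasonable and satisfies $\Psi(B,w,O^{\bar n}) = T_O^{\bar n}$.

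With this flip-to-swap correspondence in hand, I would invoke the cost-equivalence lemma for reasonable solutions, namely $w_{NAE}(B,T_O) < w_{NAE}(B,T_O^{\bar n}) \Leftrightarrow \pfkm(C',O) > \pfkm(C',O^{\bar n})$. The strictly improving flip thus translates into $\pfkm(C',O^{\bar n}) < \pfkm(C',O)$, exhibiting a strictly better neighbour of $O$ in the single-swap neighbourhood and contradicting the local optimality of $O$. This closes the argument, and since every ingredient is an already-proved lemma, the corollary follows as a direct instantiation of the MUFL argument.

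The step I expect to demand the most care is the bookkeeping between the two clause sets $B$ and $B'$: the DFKM instance is built from $B'$, which carries two clause points $b_m^1,b_m^2$ per original clause, whereas the conclusion concerns the original PNAESAT instance $B$. The observation that makes everything go through is that $\Psi$ depends only on the literal points $\{x_n,\bar x_n\}_{\OneTo{n}{N}}$ and is completely insensitive to the clause points, so a single flip in $B$ is faithfully realised by a single swap among the literal points of $O$; moreover the cost-equivalence lemma is already phrased directly in terms of the NAE-satisfaction of $B$ (the grouping of $b_m^1,b_m^2$ absorbing the passage through $B'$), so no further translation between $B$ and $B'$ is needed at this stage. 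Beyond confirming this correspondence, the proof is entirely routine.
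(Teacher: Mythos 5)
Your proof is correct and follows exactly the paper's intended argument: the paper states this corollary without proof precisely because it is the instantiation of the proof of Corollary~\ref{cor:correctness} from the MUFL section, which is what you reconstruct (non-reasonable $O$ is excluded by the preceding lemma, a profitable flip lifts to the reasonable swap $O^{\bar n}=(O\setminus\{x_n\})\cup\{\bar x_n\}$, and the cost-equivalence lemma turns it into a profitable swap, contradicting local optimality). Your remark on the $B$ versus $B'$ bookkeeping is also accurate --- the cost lemmas are already phrased in terms of NAE-satisfaction of $B$, so no extra translation is required.
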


Proving tightness of $(\Phi,\Psi)$ and embedding of $C'$ into $\ell_2^2$ follows the same line or arguments presented for \KM in Section~\ref{sec:km}.
\section{Open Problems}

In this work, we explore the local search complexity of the single-swap heuristic for MUFL and DKM.
While we prove that the problem is tightly PLS-complete in general, our reduction requires arbitrarily many dimensions, number of clusters and a non trivial weight function on the clients.
One of the first follow-up question is if we can reduce the number of dimensions $D$ down to a constant.
Moreover, it is interesting to examine whether we can obtain our results for unweighted variants of these problems.
The fact that the $K$-means method has exponential worst-case runtime even for unweighted point sets with $D=2$ indicates that this might be possible.
A potential approach to reduce the number of dimension is e.g. to embed our abstract point set using different techniques than the one presented here, since this is the only point in the proof that requires high dimensionality.


The major open result is still the conjecture of Roughgarden and Wang, that computing a local minimum of the $K$-means algorithm is a PLS-hard problem \cite{roughgarden16}.

\bibliographystyle{alpha}
\bibliography{ref}

\newcommand{\etalchar}[1]{$^{#1}$}
\begin{thebibliography}{KMN{\etalchar{+}}04}

\bibitem[AGK{\etalchar{+}}04]{arya04}
V.~Arya, N.~Garg, R.~Khandekar, A.~Meyerson, K.~Munagala, and V.~Pandit.
\newblock {Local Search Heuristics for k-Median and Facility Location
  Problems}.
\newblock {\em SIAM Journal on Computing}, 33(3):544--562, 2004.

\bibitem[APR14]{adler2014}
I.~Adler, C.~Papadimitriou, and A.~Rubinstein.
\newblock {\em {On Simplex Pivoting Rules and Complexity Theory}}, pages
  13--24.
\newblock Springer International Publishing, 2014.

\bibitem[CAKM16]{cohen16}
V.~Cohen-Addad, P.~N. Klein, and C.~Mathieu.
\newblock {Local Search Yields Approximation Schemes for k-Means and k-Median
  in Euclidean and Minor-Free Metrics}.
\newblock {\em 2016 IEEE 57th Annual Symposium on Foundations of Computer
  Science (FOCS)}, 2016.

\bibitem[Dun73]{dunn73}
J.~C. Dunn.
\newblock {A Fuzzy Relative of the ISODATA Process and Its Use in Detecting
  Compact Well-Separated Clusters}.
\newblock {\em Journal of Cybernetics}, 3(3), 1973.

\bibitem[FRR16]{friggstad16}
Z.~Friggstad, M.~Rezapour, and Salavatipour~M. R.
\newblock {Local Search Yields a PTAS for k-Means in Doubling Metrics}.
\newblock {\em 2016 IEEE 57th Annual Symposium on Foundations of Computer
  Science (FOCS)}, 2016.

\bibitem[FS15]{fearnley15}
J.~Fearnley and R.~Savani.
\newblock {The Complexity of the Simplex Method}.
\newblock In {\em Proceedings of the Forty-Seventh Annual ACM on Symposium on
  Theory of Computing}, STOC '15, pages 201--208, 2015.

\bibitem[GK99]{guha99}
S.~Guha and S.~Khuller.
\newblock {Greedy Strikes Back: Improved Facility Location Algorithms}.
\newblock {\em Journal of Algorithms}, 31(1), 1999.

\bibitem[JHY88]{johnson88}
D.~S. Johnson, Papadimitriou~C. H., and M.~Yannakakis.
\newblock How easy is local search?
\newblock {\em {Journal of Computer and System Sciences}}, 37(1), 1988.

\bibitem[JMS02]{jain02}
K.~Jain, M.~Mahdian, and A.~Saberi.
\newblock {A New Greedy Approach for Facility Location Problems}.
\newblock In {\em Proceedings of the Thiry-fourth Annual ACM Symposium on
  Theory of Computing}, STOC '02, pages 731--740, 2002.

\bibitem[JV01]{jain01}
K.~Jain and V.~V. Vazirani.
\newblock {Approximation Algorithms for Metric Facility Location and k-Median
  Problems Using the Primal-dual Schema and Lagrangian Relaxation}.
\newblock {\em J. ACM}, 48(2):274--296, 2001.

\bibitem[KMN{\etalchar{+}}04]{kanungo04}
T.~Kanungo, D.~M. Mount, N.S. Netanyahu, C.~D. Piatko, R.~Silverman, and A.~Y.
  Wu.
\newblock {A Local Search Approximation Algorithm for k-Means Clustering}.
\newblock {\em {Computational Geometry}}, 28(2), 2004.

\bibitem[Llo82]{lloyd82}
S.~Lloyd.
\newblock Least squares quantization in pcm.
\newblock {\em IEEE Transactions on Information Theory}, 28(2), 1982.

\bibitem[MDT10]{monien10}
B.~Monien, D.~Dumrauf, and T.~Tscheuschner.
\newblock {\em {Local Search: Simple, Successful, But Sometimes Sluggish}}.
\newblock Springer Berlin Heidelberg, 2010.

\bibitem[PSY90]{papadimitriou90}
C.~H. Papadimitriou, A.~A. Sch\"{a}ffer, and M.~Yannakakis.
\newblock {On the Complexity of Local Search}.
\newblock In {\em Proceedings of the Twenty-second Annual ACM Symposium on
  Theory of Computing}, STOC '90, 1990.

\bibitem[RW16]{roughgarden16}
T.~Roughgarden and J.R. Wang.
\newblock {The Complexity of the $k$-means Method}.
\newblock In {\em 24th European Symposium on Algorithms}, ESA '16, 2016.

\bibitem[Sch38]{schoenberg38}
J.~Schoenberg.
\newblock Metric spaces and positive definite functions.
\newblock {\em Transactions of the American Mathematical Society}, 1938.

\bibitem[STA97]{shmoys97}
D.~B. Shmoys, \'{E}. Tardos, and K.~Aardal.
\newblock {Approximation Algorithms for Facility Location Problems}.
\newblock In {\em Proceedings of the Twenty-ninth Annual ACM Symposium on
  Theory of Computing}, STOC '97, pages 265--274, 1997.

\bibitem[SY91]{schaeffer91}
A.~A. Schäffer and M.~Yannakakis.
\newblock {Simple Local Search Problems that are Hard to Solve}.
\newblock {\em SIAM Journal on Computing}, 20(1), 1991.

\bibitem[Tor52]{torgerson52}
W.~S. Torgerson.
\newblock {Multidimensional Scaling: I. Theory and Method}.
\newblock {\em Psychometrika}, 17(4), 1952.

\bibitem[Vat11]{vattani11}
A.~Vattani.
\newblock {k-means Requires Exponentially Many Iterations Even in the Plane}.
\newblock {\em Discrete {\&} Computational Geometry}, 45(4):596--616, 2011.

\end{thebibliography}

\end{document}